\newcommand{\removelatexerror}{\let\@latex@error\@gobble}
\DeclareMathOperator{\Prob}{P}
\newtheorem{theorem}{Theorem}[section]
\newtheorem{corollary}{Corollary}[theorem]
\newtheorem{lemma}[theorem]{Lemma}
\newtheorem{proposition}[theorem]{Proposition}
\newtheorem{definition}[theorem]{Definition}
\newcommand*\diff{\mathop{}\!\mathrm{d}}
\newcommand{\degr}{d}
\newcommand{\str}{s}
\title{Approximate Conditional Sampling for Pattern Detection in Weighted Networks}
\author{James A. Scott \\ Imperial College London \and Axel Gandy \\ Imperial College London}
\date{August 2021}
\begin{document}

\maketitle

\bigskip 
\begin{abstract}
Assessing the statistical significance of network patterns is crucial for understanding whether such patterns indicate the presence of interesting network phenomena, or whether they simply result from less interesting processes, such as nodal-heterogeneity. Typically, significance is computed with reference to a \textit{null model}. While there has been extensive research into such null models for unweighted graphs, little has been done for the weighted case. This article suggests a null model for weighted graphs. The model fixes node strengths exactly, and approximately fixes node degrees. A novel MCMC algorithm is proposed for sampling the model, and its stochastic stability is considered. We show empirically that the model compares favorably to alternatives, particularly when network patterns are subtle. We show how the algorithm can be used to evaluate the statistical significance of community structure.
\end{abstract}

\noindent%
{\it Keywords:} Approximate Tests, Exact Test, Null Models, Pattern Detection, Weighted Networks

\onehalfspacing

\section{Introduction}

This article develops a principled approach to assessing the significance of patterns observed in weighted networks. The proposed method compares the network of interest to graphs drawn from a null model. The null model is designed to account for node heterogeneity including both heavy-tailed degree and strength distributions. Unknown nuisance parameters are dealt with by approximate conditioning, and samples are drawn using a novel MCMC method. The development just outlined mirrors approaches that have long been used successfully to detect patterns in unweighted graphs \citep{connor_assembly_1979,newman_random_2001,milo_2002,maslov_detection_2004,stouffer_evidence_2007}.

Take, for example, the task of detecting community structure in weighted networks. In general, the community membership of nodes is unknown and must be recovered. Typically this is done by optimising some criterion, which could be a quality function like modularity \citep{newman_2004}. An alternative approach is to fit a statistical model which permits community structure using maximum likelihood. Possible models include the the stochastic block model \citep{nowicki_estimation_2001} and the degree-corrected stochastic block model \citep{karrer_stochastic_2011}.

Although popular, modularity is not based on any notion of the significance of a partition; rather it is defined as the absolute difference between observed inter-community links, and those expected under a given null model. As a result, it suffers from the \textit{resolution limit} \citep{Fortunato2007, kumpula_limited_2007}, whereby smaller modules cannot be detected in large networks. A number of methods attempt to overcome this by explicitly defining notions of significance \citep{Aldecoa2011,Miyauchi2016, Traag2013, Reichardt2006, palowitch_significance-based_2018,he_computing_2020}, which can be optimised over network partitions. 

Nonetheless, these methods consider the $p$-value of a fixed partition and are invalid when assessing the significance of a partition which results from optimising an objective function. It is possible to find partitions with low p-values in random graphs with no embedded community structure \citep{guimera_modularity_2004, Reichardt2006,fortunato_community_2010}.  The p-values are incorrect unless they account for the selection process. This phenomenon parallels that of inference post model selection, which is a widely studied problem that has recently garnered much attention within the statistics community \citep{taylor_statistical_2015,hastie_statistical_2019}.

In this article, we introduce a null model which can be used to quantify the significance of general patterns found in weighted graphs. For example, the approach can be used to determine the significance of community structure \textit{after having identified a partition} with an optimisation method. The approach is based on a generalisation of `rewiring' Markov chains \citep{ryser_1963, hakimi_realizability_1962, rao_1996} to weighted graphs, and is inspired by a recently developed Markov chain \citep{gandy_2016} for weighted graphs.

After introducing terminology, Section \ref{sec:wg_motiv} motivates the problem by first reviewing a common approaches in the unweighted case. Section \ref{sec:formulate} formulates the general sampling problem, and Section \ref{sec:sampling} introduces the novel MCMC method for sampling the null model. Section \ref{sec:stochastic_stability} considers the stochastic stability of the proposed sampler, and Section \ref{sec:wg_sims} performs an extensive simulation study to test the performance of the method against competing alternatives. Finally, we conclude in Section \ref{sec:wg_discussion}.

\section{Terminology}
\label{sec:wg_term}

This article is only concerned with directed graphs. Occasionally we consider unweighted graphs, which are denoted $G := (N, A)$ where $N := \{1,\ldots,n\}$ is a set of nodes and $A = (a_{uv})$ is the adjacency matrix. For weighted graphs the binary adjacency matrix is replaced with a weight matrix. Formally, $G := (N, W)$, where $W := (w_{uv})$ and $w_{uv} \in [0,\infty)$. The topology is implicit: $a_{uv} = 0$ if and only if $w_{uv} = 0$, or alternatively, $a_{uv} = w_{uv}^0$ with the convention that $0^0=0$. 

Define a node's \textit{out-degree} and \textit{in-degree} by $\degr^{-}_u := \sum_{v} a_{uv}$ and $\degr^{+}_u := \sum_{v} a_{vu}$ respectively, and collect them into vectors $\degr^- := (\degr^-_1,\ldots, \degr^-_n)^t$ and $\degr^+ := (\degr^+_1,\ldots, \degr^+_n)^t$. For weighted graphs, we define the node's out- and in-strength by $\str^{-}_u := \sum_{v} w_{uv}$ and $\str^{+}_u := \sum_{v} w_{vu}$, which are also collected into vectors $\str^- := (\str^-_1,\ldots, \str^-_n)^t$ and $\str^+ := (\str^+_1,\ldots, \str^+_n)^t$. If the graph to which an object belongs is unclear, we explicitly denote its dependence on the graph. For example we might write $W(G)$ instead of $W$.

\section{Motivating a Null Model for Weighted Graphs}
\label{sec:wg_motiv}
Null models have long been used to detect statistically significant patterns in unweighted networks. Such models have found application in a number of diverse fields, including sociology, ecology, categorical data analysis, systems biology, and community detection. While there exists an extensive literature for the unweighted case, very little has been developed for both defining and sampling an equivalent null model for weighted graphs. We now review the development of null models for unweighted graphs. 

\subsection{Null Models for Unweighted Graphs}
\label{sec:unweighted_motiv}

We define a family of distributions on the space $\mathcal{G}$ of unweighted graphs with $n$ nodes. Formally, let 
\begin{equation} \label{eq:holl}
    \Prob_{\theta}(G) := \kappa(\theta)^{-1}\exp{\left( \alpha^t
    \degr^- + \beta^t \degr^+ \right)},
\end{equation}
where $\kappa(\theta)$ is a normalizing constant and $\theta = (\alpha,\beta)^t$. The degree vectors are the sufficient statistics, or energies, of the distribution. The parameters $\alpha := (\alpha_1, \ldots, \alpha_n)^t$ and $\beta := (\beta_1,\ldots, \beta_n)^t$ control the distribution of out-degrees and in-degrees, with $\alpha_u$ and $\beta_u$ representing the sociability and popularity of node $u$. This is an exponential model, and may be viewed as a directed analogue of the $\beta$-model \citep{chatterjee_random_2011}, or as a special case of the $p_1$-family \citep{holland_leinghardt_1981}, whereby the reciprocity parameters are uniformly taken to be $\rho_{uv} = 0$. These $p_1$ models were introduced in the context of social network analysis, and were extended to the class of Markov Graphs by \citet{frank_markov_1986}, and eventually to the class of $p^*$, or exponential random graph models (ERGMs) \citep{wasserman_logit_1996}.

The model, and its undirected equivalent, are routinely used to measure the significance of properties observed in real-world networks. Measuring significance is useful a number of tasks; including for use in hypothesis testing, which is used to find evidence of local graph patterns \citep{milo_2002}. An example of such a pattern is reciprocity \citep{holland_leinghardt_1981}, which is often evident in social networks. Significance can also be optimised directly by including it in an objective function. This approach helps to discover network patterns, and is widely used for community detection \citep{newman_2004}.

In general, a practitioner will measure a property of interest in a network, which may be community structure, clustering, or a network motif, for example. This is usually summarised by a statistic $T:\mathcal{G} \to \mathbb{R}$, with large $T$ implying greater prevalence of the property. The observed value $t_0$ can only be interpreted in the context of \textit{the distribution of $T$ under a suitable null model}. To put this in a formal framework, we embed \eqref{eq:holl} in a larger exponential family
\begin{equation*}
    \Prob_{(\theta,\delta)}(G) := \kappa(\theta,\delta)^{-1} \exp{\left(\alpha^t\degr^- + \beta^t\degr^+ + \delta T(G)\right)},
\end{equation*}
which includes the statistic of interest. The null hypothesis that \eqref{eq:holl} provides a good fit of the network, i.e. that $t_0$ is not significant, is equivalent to testing $\delta = 0$ against the alternative $\delta \neq 0$. This is the approach suggested in \citet{holland_leinghardt_1981} to test the goodness of fit of the $p_1$-model, but can equally be interpreted as quantifying the extent to which $t_0$ is surprising under \eqref{eq:holl}. 

Notice that the hypothesis $\delta = 0$ is composite because the null depends on the unknown nuisance parameters $\theta$. The typical way to deal with this is to \textit{condition on the sufficient statistics}, which in this case are $\degr := ({\degr^-}^t,{\degr^+}^t)^t$. It is shown in \citet{lehmann_testing_2006} that tests based on this conditional distribution are optimal, i.e. the uniformly most powerful unbiased (UMPU) test of $\delta = 0$ against $\delta \neq 0$. If in fact the observed graph $G_0 \sim \Prob_{\theta_0}$ for some $\theta_0$, then the conditional distribution of $G_0$ given degrees is uniform on
\begin{equation*}
    \mathcal{G}(d_0) := \{G \in \mathcal{G} : \degr(G) = d_0\},
\end{equation*}
where $d_0 := \degr(G_0)$. This is the set of all graphs with the same degree sequence as $G_0$. This fact is obvious because \eqref{eq:holl} only depends on $G$ through the degrees.

In general, the conditional distribution of the test function is not available analytically, and so we typically resort to drawing samples $G_1,\ldots, G_N \sim  \text{Uniform}(\mathcal{G}(d_0))$. Significance $p \in [0,1]$ is then computed by comparing $t_0$ to the associated empirical distribution, i.e.
\begin{equation} \label{eq:significance}
    p := \frac{1}{N}\sum_{i=1}^N \mathbbm{1}_{[t_0,\infty)}(T(G_i)).
\end{equation}
The algorithms used to sample $G_1, \ldots G_N$ depend on our initial assumptions on the graph space. If $\mathcal{G}$ permits non-simple graphs, i.e. allows both self-loops and multiple edges, then it is straightforward to draw independent samples using the \textit{pairing model} (also known as the configuration model), which was first discussed by \citet{bollobas_probabilistic_1980, bender_asymptotic_1978}. However, in practice most networks are simple and if we restrict $\mathcal{G}$ accordingly, the situation becomes more complex. In particular, there is no straightforward method for drawing independent and exactly uniform samples. A common approach is to construct a Markov chain based on randomly rewiring edges, such that node degrees are exactly maintained \citep{ryser_1963, hakimi_realizability_1962, rao_1996}. This yields correlated samples which are asymptotically uniform, and can be treated as approximately independent if the chain is thinned appropriately. An alternative approach is to construct samples using sequential importance sampling \citep{bayati_2010, chen_conditional_2007, snijders_1991, blitzstein_diaconis_2011, zhang_chen_2013}.

\subsubsection{Why Conserve Degrees?}
\label{sec:whydegrees}

The most obvious starting point for a null model would be the directed Erd\H{o}s-Rényi model. However, this implies that node degrees are i.i.d. Binomial, and in particular that all nodes have the same expected degrees. In practice, degree distributions are rarely binomial, and are instead often heavy-tailed. This is a problem because the prevalence of many graph structures is tied to heterogeneity between nodes, and in particular the degree distribution. Practitioners are typically not interested in structure that arises purely as an artefact of this, and are instead looking for evidence of higher-order processes governing the formation of the network. Since the Erd\H{o}s-Rényi model cannot faithfully model degree distributions, it does not provide an adequate baseline with which to compare real networks to. By including $\degr^-$ and $\degr^+$ as sufficient statistics in \eqref{eq:holl}, the parameters $\alpha$ and $\beta$ can explicitly account for nodal heterogeneity, making $\Prob_{\theta}$ more suitable as a null model.

\subsection{Extending to Weighted Graphs}
\label{sec:weighted_motiv}

In the weighted case, a natural question is whether the strength sequences could substitute for the degrees in \eqref{eq:holl}. This approach has been proposed in the statistical mechanics literature, and is often referred to as the \textit{weighted configuration model} \citep{squartini_randomizing_2011, serrano_weighted_2005, serrano_correlations_2006}. It fails to faithfully model the topology of real networks. When $w_{uv}$ is continuous, all mass is on complete networks. When integer-valued, the probability of each edge existing approaches one for most real networks. The upshot is that degrees are important for conveying a graph's topology, and should be used \textit{in addition} to the strengths. Therefore we consider
\begin{equation} \label{eq:weighted_ergm}
    \Prob_{\theta}(G) := \kappa(\theta)^{-1}\exp{\left( \alpha^t
    \degr^- + \beta^t \degr^+ + \phi^t
    \str^- + \psi^t \str^+ \right)},
\end{equation}
which is an exponential family and an extension of \eqref{eq:holl}. $\kappa(\theta)$ denotes the normalizing constant and $\theta := (\alpha, \beta, \phi, \psi)^t$.  We assume that $\phi_u < 0$ and $\psi_u < 0$ for all $u \in N$ for reasons that will soon be clear. This model has appeared in \citet{mastrandrea_enhanced_2014}, where it was employed to reconstruct networks from node-level data. Following these authors, we refer to it as the \textit{directed enchanced configuration model} (DECM).

Clearly this model is neither elegant nor parsimonious. For any given node, there is likely to be high correlation between its fixed effects, and one naturally wonders whether the many parameters could be reduced by, for example, positing a simple functional relationship between degrees and strengths. The point, however, is that the model is general; by including fixed effects for both strengths and degrees it contains as sub-models many reasonable processes governing nodal-heterogeneity. This generality is essential for controlling for nodal effects when testing for higher-order processes that might explain network formation.

Unlike most exponential random graph models, the model is tractable and has a simple edge-level interpretation. An edge exists with probability
\begin{equation} \label{eq:ppos}
    P\{w_{uv} > 0\} = \frac{e^{\alpha_u+\beta_v}}{e^{\alpha_u+\beta_v} + \lambda_{uv}},
\end{equation}
where $\lambda_{uv} := - \phi_u - \psi_v$. A link is more likely to form if $u$ and $v$ are (topologically) sociable and popular respectively. The probability increases with $\phi_u$ and $\psi_v$, showing that edge formation also depends on the strength parameters.

Conditional on the edge $uv$ existing, its weight $w_{uv}$ follows an exponential distribution with rate $\lambda_{uv}$. The constraints on $\phi$ and $\psi$ ensure that this is positive. The exponential distribution is \textit{memoryless}, and so the probability of reinforcing an existing link by one unit is
\begin{equation*}
    P\{w_{uv} \geq x+1 \mid w_{uv} \geq x\} = e^{\phi_u + \psi_v},
\end{equation*}
for all $x > 0$. Since this is invalid when a link does not exist (i.e. when $x = 0$) there is a different cost for reinforcing an edge as opposed to forming a new edge. This permits network sparsity, and makes the model more suitable for modeling real networks than the weighted configuration model.

As in the unweighted case, the task is to use \eqref{eq:weighted_ergm} as a null model for quantifying the significance of a property of interest, which is measured by a statistic $T:\mathcal{G} \to \mathbb{R}$. Heuristic approaches have been proposed for this using maximum likelihood estimation \citep{mastrandrea_enhanced_2014, gabrielli_grand_2019}. Since \eqref{eq:weighted_ergm} is an exponential family, the MLE $\hat{\theta}$ can be found numerically as the solution to the $4n$ coupled equations given by setting observed sufficient statistics to their expectation. It is then straightforward to generate independent samples from the model with parameter $\hat{\theta}$. The observed statistic $t_0$ can then be compared to the sampled networks.

The aforementioned approach does not consider uncertainty around the MLE. In analogy to Section \ref{sec:unweighted_motiv}, a more formal approach considers the extended model
\begin{equation}
    \Prob_{\theta, \delta}(G) := \kappa(\theta, \delta)^{-1}\exp{\left( \alpha^t
    \degr^- + \beta^t \degr^+ + \phi^t
    \str^- + \psi^t \str^+ + \delta T(G) \right)},
\end{equation}
and formulates the problem as assessing the probability of observing $t_0$ given that $\delta = 0$. Within the likelihood framework, one approach is to appeal to Wilks' theorem, which states that the likelihood ratio statistic is, under regularity conditions, asymptotically Chi-squared with one degree of freedom. Unfortunately the conditions required to apply Wilks' theorem, and indeed even for appealing to the asymptotic consistency of the MLEs, do not hold in this model. The data $\{w_{uv}\}$ are not identically distributed under \eqref{eq:weighted_ergm}, and the number of independent parameters grow linearly with $n$. This observation has been made repeatedly for the unweighted case \eqref{eq:holl} \citep{holland_leinghardt_1981, snijders_1991, mcdonald_markov_2007}, but to our knowledge has received little attention in articles using \eqref{eq:weighted_ergm}.

Recall that the optimal test of $\delta = 0$ conditions on the sufficient statistics. The resulting null model would then be uniform on
\begin{equation*}
    \mathcal{G}(d_0,s_0) := \{G\in \mathcal{G} : \degr(G) = d_0, \str(G) = s_0\},
\end{equation*}
where $d_0 := \degr(G_0)$ and $s_0 := \str(G_0)$. This is the set of graphs conserving both degree and strength sequences exactly. Sampling uniformly from this set is in general a difficult problem, and we are not aware of any methods that have been proposed to achieve this. For this reason, our approach is to \textit{approximately condition} on the degrees, and consider instead the set
\begin{equation} \label{eq:targetspace}
    \mathcal{G}_m(d_0,s_0) := \{G\in \mathcal{G} : \|\degr(G) - d_0\|_{\infty} \leq m, \str(G) = s_0\},
\end{equation}
for $m > 0$. This maintains strengths exactly, and keeps all node degrees within $m$ of the observed values.

\section{The General Problem}
\label{sec:formulate}

We have motivated the task of sampling from the conditional distribution of \eqref{eq:weighted_ergm} given degrees and strengths. Indeed, this is the focus of the article. Nonetheless, other reasonable weighted null models exist. For example, \citet{palowitch_significance-based_2018} recently introduced the \textit{continuous configuration model}, which is a weighted extension the Chung-Lu model \citep{chung_average_2002, chung_connected_2002}. Since alternatives could be used, we keep the setting general.

The general problem is as follows. Let $\mathcal{G}$ be the space of graphs with $n$ nodes, which may prohibit edges in a set $\mathcal{F}\subseteq N^2$. That is, $G \in \mathcal{G}$ only if $w_{uv} = 0$ for all $uv \in \mathcal{F}$. This is typically employed to disallow self-loops, but can also be used to match any pattern of non-edges, including none at all. We hypothesise that the observed graph $G_0 \in \mathcal{G}$ is distributed according to some null model $P$. Viewing $P$ as defined on the space of weight matrices $[0,\infty)^{n\times n}$ and its Borel $\sigma$-algebra, the subset of this space not respecting $\mathcal{F}$ must be $P$-null. It is assumed throughout that $P$ has a density $f$ with respect to 
\begin{equation} \label{eq:Q}
\Lambda := \sum_{A \in \{0,1\}^{n\times n}} \lambda_{A},
\end{equation}
where $\lambda_A$ is $\|A\|_{0}$-dimensional Lebesgue measure on $\{ W \in [0,\infty)^{n\times n} : w_{uv}^0 = a_{uv}\}$, and where by convention $0^0 = 0$. These sets partition $[0,\infty)^{n\times n}$ and so $\{\lambda_A\}$ are mutually singular. The requirement ensures that if an edge exists, i.e. if $w_{uv}>0$, then it is continuous. It also permits network sparsity by allowing different topologies to have positive probability.

Let $d_0 := \degr(G_0)$ and $s_0 := \str(G_0)$. As mentioned, we are not able to sample from the set of graphs with degrees $d_0$ and strengths $s_0$, because the set is too constrained for our sampler to traverse. Instead, we opt to \textit{approximately} condition on the degrees. As we will see, this provides enough `slack' to construct a sampler. Define, for each integer $m > 0$, the function
\begin{equation} \label{eq:approxdegs}
    \degr_m(G) :=\mathbbm{1}_{N_m}(\degr(G)),
\end{equation}
where $N_m := \{d' : \|d' - d_0\|_{\infty} \leq m\}$ is a neighbourhood of $d$. Conditioning on this leads to graphs where each node has degrees that are within $m$ of the same node in $G_0$. For $m > n$, all graphs satisfy the degree condition and, in effect, we remove any conditioning on degree information. Fix some $m > 0$. The target distribution $\pi$ of our sampler is the conditional distribution of $P$ given the functions $\degr_m$ and $\str$. Its support is $\mathcal{G}_m(d_0,s_0)$. We now turn our attention to constructing a Markov chain capable of targeting this distribution.

\section{Randomising Weighted Graphs}
\label{sec:sampling}

Sampling from $\mathcal{G}_m(d_0,s_0)$ is difficult because the space is highly constrained. Here we develop a Markov chain approach to the problem. The algorithm is inspired by the rewiring chains that are already widely applied in the literature for unweighted graphs. It relies on repeatedly applying local moves, referred to as $k$-cycles \citep{gandy_2016}.

\subsection{Introducing $k$-cycles}
\label{sec:kcycles}

Consider the following `rewiring' update used to randomise simple unweighted directed graphs while preserving degrees exactly. Select two edges $u_1v_1$ and $u_2v_2$ uniformly at random. If $u_1$, $u_2$, $v_1$ and $v_2$ are not all distinct, or if either of $u_1v_2$ or $u_2 v_1$ are already edges, then reject and start again. Otherwise remove $u_1v_1$ and $u_2v_2$ from the edge set and replace them with $u_1v_2$ and $u_2v_1$. This local procedure is applied continually to randomise the network.

Here we introduce analogous updates for weighted graphs, referred to as $k$-cycles. These originally appeared in \citet{gandy_2016}. First fix two vectors of mutually disjoint nodes $(u_1,\ldots,u_k)$ and $(v_1,\ldots,v_k)$, where $k$ is between $2$ and $n$. A $k$-cycle attempts to update the weight matrix along the $2k$ coordinates
\begin{equation} \label{eq:coords}
    \{u_1v_1,u_1v_2,u_2v_2,\dots, u_kv_k, u_kv_1\},
\end{equation}
conditional on all other values. Figure \ref{fig:example_moves} depicts examples of these coordinates for different $k$. If $k = 2$ then four edges are potentially updated, which is similar to the rewiring move in unweighted graphs. It turns out, however, that we need to allow longer updates $k > 2$ to ensure irreducibility of the Markov chain.

\begin{figure}[tbp]
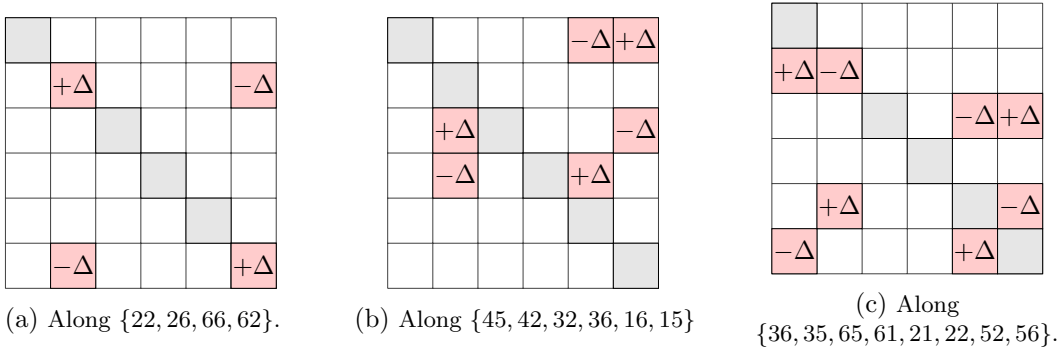

    \centering
    \includestandalone{figures/matrix_drawing}
    \caption{\small Example $k$-cycles of different lengths on a graph with 6 nodes. Self-loops are disallowed, as indicated by the gray squares. Therefore, the proposed cycles in Figures (a) and (c) would be rejected.}
    \label{fig:example_moves}
\end{figure}

Let $w := (w_1,w_2,\ldots,w_{2k})^t$ refer to the delineated edge weights corresponding to the coordinates \eqref{eq:coords}. Throughout this article, we refer to these weights as the \textit{cycle-weights}. These values must be updated so as to remain within $\mathcal{G}_m(d_0,s_0)$. In particular, conserving the strengths is equivalent to maintaining the marginals of the weight matrix. Because all edges outside of a $k$-cycle are considered fixed, conserving strengths is in fact equivalent to conserving the consecutive sums
\begin{equation} \label{eq:consum}
    (w_1+w_2,w_2+w_3,\ldots,w_{2k-1} + w_{2k}),
\end{equation}
exactly. Figure \ref{fig:example_moves} should help to convince the reader of this statement. We argue in Section \ref{sec:kcyclecond} that any update to $w$ must take the form $w + a \Delta$, where $a = (+1,-1,+1,\ldots,-1)^t$ and where the scalar $\Delta$ lies within a bounded interval that we are yet to define. This fact is also visualised in Figure \ref{fig:example_moves}.

\subsubsection{Relationship to Rewiring Moves}

For intuition, we clarify the relationship between rewiring moves in unweighted graphs and $2$-cycles. First assume that the weight matrix of an unweighted graph is synonymous with its adjacency matrix. A $2$-cycle would update the coordinates $\{u_1v_1, u_1v_2, u_2v_2, u_2v_1\}$. If in fact $w = (1,0,1,0)^t$, then setting $\Delta = -1$ leads to a new value $(0,1,0,1)^t$, performing the same edge replacement as a rewiring move. $\Delta = 0$ corresponds to rejecting a move, and would occur if say $w = (1,1,1,0)^t$. In unweighted graphs $\Delta$ could only ever lie in $\{-1,0,1\}$, rather than within a real interval as in the weighted case.

\subsection{Conditional Distribution along a $k$-cycle}
\label{sec:kcyclecond}

So far we have characterised $k$-cycles as updating certain subsets of the weight matrix while keeping all other entries fixed. Their definition is incomplete, as we are yet to describe how to make them reversible with respect to the target $\pi$. Since $k$-cycles are block updates of the weight matrix, it suffices for them to be reversible with respect to the \textit{full conditionals} of the cycle-weights. Here we derive these full conditionals. 

Formally, fix an arbitrary $k$-cycle and let $Q$ be the conditional distribution of $P$ given all weights outside of the $k$-cycle. $Q$ is taken to live on $(\Omega_{2k}, \mathcal{B}(\Omega_{2k}))$, where $\Omega_{d} := [0,\infty)^{d}$ is the $d$-dimensional non-negative orthant. Recall that $P$ has density $f$ with respect to \eqref{eq:Q}. Throughout this section, we let $f(x)$ denote $f$ evaluated at the weight matrix implied by letting the cycle-weights have value $x \in \Omega_{2k}$.

For intuition, we give an example of $Q$. Suppose $P$ is the DECM, then 
\begin{equation*}
    Q := \prod_{uv} \left(\frac{e^{\alpha_u+\beta_v}}{e^{\alpha_u+\beta_v} + \lambda_{uv}}Q_{uv} + \frac{\lambda_{uv}}{e^{\alpha_u+\beta_v} + \lambda_{uv}}\delta_{\{0\}}\right),
\end{equation*}
where the product is over the coordinates \eqref{eq:coords}, and notation is as in Section \ref{sec:weighted_motiv}. $Q_{uv}$ is the exponential distribution with rate $\lambda_{uv}$, and $\delta_{\{0\}}$ is the Dirac measure at zero. This is not, of course, the full conditional distribution of interest as we have neither conditioned on the strengths nor the degrees. Our approach is to first consider the strengths (Section \ref{sec:condonstrengths}), and then the degrees (Section \ref{sec:condondegrees}). 

\subsubsection{Conditioning on node strengths}
\label{sec:condonstrengths}

As argued in Section \ref{sec:kcycles}, conditioning $Q$ on the strengths is the same as conditioning on the consecutive sum \eqref{eq:consum}. This sum is formalised as a statistic $T: \Omega_{2k} \to \Omega_{2k-1}$. Rigorously proving the conditional distribution of $Q$ given $T$ is a difficult task, because $Q$ is neither fully discrete or continuous. This motivates a general definition of the conditional distribution, which is as follows.

\begin{definition}[Conditional Distribution] \label{def:condProb}
    A family $\mathcal{Q} := \{ Q_t : t \in \Omega_{2k-1}\}$ of probability measures on $\mathcal{B}(\Omega_{2k})$ is the conditional probability distribution of $Q$ given $T$ if 
    
    \begin{enumerate}
        \item $Q_t \{ T \neq t\} = 0$ for $TQ$-almost all $t$ in $\Omega_{2k-1}$, and
        \item if $g:\Omega_{2k} \to \mathbb{R}$ is nonnegative and measurable then $t \mapsto \int g(x) Q_t(\diff x)$ is measurable and
        \begin{equation} \label{eq:disintegration}
            \int g(x) Q(\diff x) = \int \int g(x) Q_t(\diff x) TQ (\diff t).
        \end{equation}
    \end{enumerate}
\end{definition}

First consider the level sets $\{T = t\}$ on which each $Q_t$ lives. It is easy to see that $\{T = t\}$ is a closed line segment $L_t$ that can be parameterised by
\begin{equation}
    \label{eq:lineSegmentParam}
    L_t(\Delta) := x + \Delta a,
\end{equation}
where $x$ is an arbitrary element of $L_t$ and $a := (+1,-1,+1,\ldots,-1)^t$ is the alternating vector described in Section \ref{sec:kcycles}. The scalar parameter $\Delta$ must lie in $[\Delta_{l}, \Delta_{u}]$ where $-\Delta_l$ and $\Delta_u$ are the smallest odd and even elements of $x$ respectively. The boundary points are $x_l := L_t(\Delta_l)$ and $x_u := L_t(\Delta_u)$.

Proposition \ref{prop:condProb} states the conditional distribution. The proof is provided in Appendix \ref{sec:condProbProof}. The proposition defines each $Q_t$ in terms of another distribution $\mu_t$ on $(\Omega_{2k}, \mathcal{B}(\Omega_{2k}))$, whose support is $L_t$. This is defined through
\begin{equation*}
\label{eq:line_measure}
    \mu_t(B) := \frac{\int_{L_t \cap B} f(x) ds}{\int_{L_t} f(x) ds},
\end{equation*}
for each $B \in \mathcal{B}(\Omega_{2k})$. Both integrals in this expression are line integrals, and the denominator serves as a normalising constant. 

\begin{proposition} \label{prop:condProb}
Fix any $t \in \Omega_{2k-1}$. If the boundary points satisfy $\|x_l\|_0 = \|x_u\|_0 = 2k-1$ then let
\begin{equation} \label{eq:2kminus1}
    Q_t := \frac{1}{\kappa_t}\left(f(x_l)\delta_{\{x_l\}} +  f(x_u)\delta_{\{x_u\}} + \alpha_t \mu_t\right),
\end{equation}
where $\alpha_t = \frac{1}{\sqrt{2k}}\int_{L_t}f(x)ds$ and $\kappa_t :=  f(x_l) + f(x_u) + \alpha_t$. Otherwise let
\begin{equation} \label{eq:conddistless}
Q_t :=
    \begin{cases}
    \delta_{\{x_l\}} & \text{if } \|x_l\|_{0} < \|x_u\|_{0} \\
    \delta_{\{x_u\}} & \text{if } \|x_l\|_{0} > \|x_u\|_{0} \\
    \kappa_t^{-1}\left(f(x_l)\delta_{\{x_l\}} +  f(x_u)\delta_{\{x_u\}}\right) & \text{if } \|x_l\|_{0} = \|x_u\|_{0} < 2k-1, \\
    \end{cases}
\end{equation}
where $\kappa_t := f(x_l) + f(x_u)$. The collection $\mathcal{Q} := \{Q_t : t \in \Omega_{2k-1}\}$ is the conditional distribution of $Q$ given $T$.
\end{proposition}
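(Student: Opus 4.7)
The plan is to verify the two conditions of Definition~\ref{def:condProb}. Condition~1 is immediate: the Dirac masses in~\eqref{eq:2kminus1}--\eqref{eq:conddistless} are located at the endpoints $x_l, x_u \in L_t$, and $\mu_t$ is supported on $L_t$ by its line-integral definition, so $Q_t\{T \neq t\} = 0$ for every $t$.

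For Condition~2 I would decompose $\int g \, dQ$ stratum-by-stratum according to the topology of the cycle coordinates, using the analogue of $\Lambda$ on $\Omega_{2k}$, namely $\Lambda' := \sum_{a \in \{0,1\}^{2k}} \lambda'_a$ with $\lambda'_a$ the $\|a\|_0$-dimensional Lebesgue measure on $\{x \in \Omega_{2k} : x^0 = a\}$. On the full-topology stratum $S_{(1,\ldots,1)} = (0,\infty)^{2k}$, the fibres of $T$ are the relative interiors of the $L_t$, and the coarea formula yields
\begin{equation*}
    \int_{S_{(1,\ldots,1)}} g(x) f(x) \, dx = \int_{\Omega_{2k-1}} \frac{1}{\sqrt{2k}} \int_{L_t} g(x) f(x) \, ds \, dt,
\end{equation*}
where the Jacobian $\sqrt{\det(T T^t)} = \sqrt{2k}$ is obtained from the tridiagonal structure of $T T^t$ (diagonal entries $2$, off-diagonal entries $1$), whose determinant satisfies the three-term recursion $D_j = 2 D_{j-1} - D_{j-2}$ and evaluates to $2k$. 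The right-hand side equals $\int \alpha_t (\int g \, d\mu_t) \, dt$, matching the absolutely continuous part of $Q_t$. On each single-zero stratum $\{x_i = 0\} \cap \Omega_{2k}$, the restriction of $T$ is a bijection onto an open subset of $\Omega_{2k-1}$ with Jacobian $1$ (deleting column $i$ from $T$ leaves a triangular matrix with unit diagonal, by direct inspection), and so contributes $\int g(x_l) f(x_l) \, dt$ or $\int g(x_u) f(x_u) \, dt$ on the $t$-set where that stratum meets $L_t$. Summing these three contributions and specialising to $g = h \circ T$ identifies $TQ(dt) = \kappa_t \, dt$ on the generic set $\{\|x_l\|_0 = \|x_u\|_0 = 2k-1\}$; substituting this density back recovers~\eqref{eq:disintegration} with $Q_t$ as in~\eqref{eq:2kminus1}.

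The main obstacle is the case analysis behind~\eqref{eq:conddistless}. When two or more cycle coordinates vanish simultaneously at an endpoint, the minimum defining $\Delta_l$ or $\Delta_u$ is not uniquely attained, and the relevant $t$ lies in the image under $T$ of a stratum with $\|a\|_0 < 2k-1$, which is a lower-dimensional (hence Lebesgue-null and therefore $TQ$-null) subset of $\Omega_{2k-1}$. On this null set~\eqref{eq:disintegration} is automatically satisfied regardless of how $Q_t$ is defined there, so the formulas in~\eqref{eq:conddistless} serve only to make $\{Q_t\}$ a well-defined probability kernel and preserve measurability of $t \mapsto \int g \, dQ_t$; the latter follows from piecewise continuity of $t \mapsto x_l(t), x_u(t)$ on each combinatorial type and a routine application of Tonelli's theorem to the line-integral defining $\alpha_t$.
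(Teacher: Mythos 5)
Your treatment of the generic stratum is sound and is a legitimate variant of the paper's argument: where you invoke the coarea formula with Jacobian $\sqrt{\det(TT^t)}=\sqrt{2k}$ and handle the single-zero strata by deleting a column of $T$, the paper instead partitions $\Omega_{2k}$ by the argmin pattern of the odd and even coordinates and applies explicit linear changes of variables ($T_1$ and $T_a$, both with unit Jacobian); the two computations produce the same identity $TQ(\diff t)=\kappa_t\,\diff t$ on the set where $\|x_l\|_0=\|x_u\|_0=2k-1$, and recover the mixture \eqref{eq:2kminus1} there.

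However, your dismissal of the remaining cases contains a genuine error. You assert that the image under $T$ of a stratum with $\|a\|_0<2k-1$ is ``Lebesgue-null and therefore $TQ$-null.'' The first claim is true, the implication is false: $TQ$ is not absolutely continuous with respect to Lebesgue measure on $\Omega_{2k-1}$. The dominating measure $\Lambda'$ you yourself introduce charges every stratum, so $Q$ assigns strictly positive mass to, say, $V_a$ with two zeros at odd positions (for the DECM this is exactly the event that two edges of the cycle are absent, which has positive probability), and hence $TQ$ assigns positive mass to $T(V_a)$ even though that set has Lebesgue measure zero in $\Omega_{2k-1}$. On this positive-mass set of $t$'s the disintegration \eqref{eq:disintegration} is not automatic, and the formulas in \eqref{eq:conddistless} are not an arbitrary convention: taking $g=\mathbbm{1}_{V_a}$ shows that defining $Q_t$ as, e.g., $\mu_t$ there would violate \eqref{eq:disintegration}. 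What actually needs to be proved --- and is the content of the paper's Case 3 and of the $V_{a''}$ computation --- is that for such $t$ the contributions from the \emph{higher}-dimensional strata meeting $L_t$ vanish (because they would force ties among positive coordinates, a null event within those strata), so that only the lowest-dimensional endpoint survives and $Q_t$ collapses to $\delta_{\{x_l\}}$ or $\delta_{\{x_u\}}$, with the two-point mixture arising when the segment degenerates to a single point. This case analysis is also what makes the sampler able to change graph topology, so it cannot be discarded as a measure-zero technicality. The genuinely negligible situation is only the sub-case $\|x_l\|_0=\|x_u\|_0<2k-1$ with $x_l\neq x_u$, which does require positive ties and is the paper's Case 2.
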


\subsubsection{Approximate conditioning on degrees}
\label{sec:condondegrees}

Suppose now that we consider degrees \textit{in addition} to the strengths, i.e. we wish to approximately condition $Q_t$ on the degrees. To formalise this, first fix an arbitrary $x \in L_t$ and let $G_x$ refer to the graph obtained by letting the cycle-weights take the value $x$, whilst keeping all other weights fixed. We then condition on $x \mapsto \degr_m(G_x)$, which is a map from $L_t \to \{0,1\}$. This statistic depends implicitly on the topology of the graph outside of the $k$-cycle, which is of course fixed. We assume that this topology is such that
\begin{equation} \label{eq:condset}
    \{x \in L_t : \degr_m(G_x) = 1\},
\end{equation}
is non-empty. The set of graphs for which this is empty is of course $\pi$-negligible because the graphs cannot lie within  $\mathcal{G}_m(d_0,s_0)$. Therefore, this case can be safely ignored.

Conditioning $Q_t$ on $x \mapsto \degr_m(G_x)$ is equivalent to restricting it to \eqref{eq:condset}, i.e. the points at which the associated graph has degrees close enough to the target vector. These graphs can have one of at most three topologies. If $x_l \neq x_u$ then the topologies of $G_{x_l}$ and $G_{x_u}$ are different, because the zero elements of $x_l$ and $x_u$ are distinct. If $x_1$ and $x_2$ are both in the interior of $L_t$, then the topology of $G_{x_1}$ and $G_{x_2}$ are the same, because all entries in $x_1$ and $x_2$ are positive. This shows that conditioning may assign zero probability to either of the boundary points, or to the entire interior of $L_t$.

Special attention should be given to the case where \eqref{eq:condset} is $Q_t$-negligible. This would happen, for example, if $x_l$ has more zeros than $x_u$, but also $\degr_m(G_{x_l}) = 0$. Another possibility is that $x_l$ and $x_u$ have the same number of zeros, but more than one, and $\degr_m(G_{x_l}) = \degr_m(G_{x_u}) = 0$. In both cases, it is easy to see that all points in \eqref{eq:condset} must have ties among positive elements, which is a $\pi$-negligible event. Therefore, the conditional distribution can be defined arbitrarily in this case.

\subsubsection{Example: Conditional Distribution for the DECM}
\label{sec:condfornull}

Here we specialise to the DECM. The resulting conditional distribution will be easy to sample directly, providing a convenient way to perform $k$-cycles.

First suppose that $x_l$ and $x_u$ each have one zero entry. To compute the line integral appearing in the conditional $Q_t$, first observe that $L_t(\Delta + \diff \Delta) = L_t(\Delta) + a \diff \Delta$ for any $\Delta \in (\Delta_l,\Delta_u)$, and so
\begin{equation*}
    \diff s = \|L_t(\Delta + \diff \Delta) - L_t(\Delta)\|_2 = \sqrt{2k}\diff \Delta,
\end{equation*}
where $\diff s$ is the differential on $L_t$. Therefore
\begin{align*}
    \int_{L_t} f(x) \diff s & = \sqrt{2k} \int_{\Delta_l}^{\Delta_u} f(x + a\Delta) \diff \Delta \\
    & = \sqrt{2k} f(x + a\Delta^*) (\Delta_u - \Delta_l),
\end{align*}
where $\Delta^* \in (\Delta_l,\Delta_u)$. Here we have used that $f(x + a\Delta) = f(x + a\Delta^*)$ for all $\Delta \in (\Delta_l,\Delta_u)$. This is true because $f$ depends only on degrees and strengths, which are invariant over such $\Delta$. It is also easy to verify that
\begin{align}
    f(x_l) &= e^{-(\alpha_{u_1} + \beta_{v_1})} f(x + a\Delta^*) \label{eq:flow}\\
    f(x_u) &= e^{-(\alpha_{u_2} + \beta_{v_2})} f(x + a\Delta^*), \label{eq:fup}
\end{align}
where $u_1v_1$ and $u_2v_2$ are the edges corresponding to the zero weights in $x_l$ and $x_u$ respectively. Putting this together, \eqref{eq:2kminus1} reduces to 
\begin{equation*}
    Q_t := \frac{1}{\kappa_t}\left(e^{-(\alpha_{u_1} + \beta_{v_1})}\delta_{\{x_l\}} + e^{-(\alpha_{u_2} + \beta_{v_2})} \delta_{\{x_u\}} + (\Delta_u - \Delta_l) \mu_t\right),
\end{equation*}
where $\mu_t$ is the uniform distribution on $L_t$, and $\kappa_t = e^{-(\alpha_{u_1} + \beta_{v_1})} + e^{-(\alpha_{u_2} + \beta_{v_2})} + \Delta_u - \Delta_l$. $Q_t$ in the remaining cases (shown in \ref{eq:conddistless}) are found similarly. Of course, $Q_t$ is not the distribution of interest as we also need to approximately condition on the degrees. This is straightforward and consists of restricting $Q_t$ to the appropriate parts of $L_t$, as was outlined in Section \ref{sec:condondegrees}. Direct sampling from both $Q_t$ and $Q_t$ given approximate degrees is straightforward. For general densities, however, the line integral would typically need to be computed by numerical integration. Furthermore, direct sampling from $\mu_t$ may not be possible, and could require more sophisticated methods like rejection sampling. 

\subsection{Performing the $k$-cycle}
\label{sec:performing}

We are now ready to describe the full $k$-cycle for the DECM. $\pi$-invariance is automatically satisfied since we  sample directly from the full conditional. One remaining issue, however, is that the parameter vectors $\alpha$ and $\beta$ in \eqref{eq:weighted_ergm} are unobserved. The full conditional still depends on these because the degrees have not been conditioned on exactly. Nonetheless, their influence is small when conditioning on $\degr_m$ for small $m$. One option is to assume $\alpha = \beta = 0$ so that they do not appear in the distribution. Another option is to estimate them via maximum likelihood. Algorithm \ref{alg:kcycle} gives pseudo-code for the algorithm which uses some assumed values $\hat{\alpha}$ and $\hat{\beta}$.

\begin{figure}[tbp]
\removelatexerror
\begin{algorithm}[H]
\small
\caption{\small A $k$-cycle for the DECM.}
\label{alg:kcycle}
\KwIn{$G$, $z$, $\hat{\alpha}$ and $\hat{\beta}$\;}
\lIf{$z \cap  \mathcal{F} \neq \emptyset$}{\Return $G$}
$x \gets W_z(G)$\;
$\Delta_l \gets -\min_{i}{(x_{2i+1})}$ and $\Delta_u \gets \min_{i}(x_{2i})$\;
\lIf{$\Delta_u = \Delta_l = 0$}{\Return $G$}
Let $z_l$ and $z_u$ be edges corresponding to elements of $x + a \Delta_l$ and $x + a \Delta_u$ that are zero respectively\;
$n_l \gets |z_l|$ and $n_u \gets |z_u|$\;
$p_l \gets p_u \gets p_{\text{int}} \gets 0$\;
\lIf{$\degr_m(G_{x + a\Delta_l})=1$ and $n_l \geq n_u$}{$p_l \gets \prod_{uv\in z_l} e^{-\hat{\alpha}_u - \hat{\beta}_v}$}
\lIf{$\degr_m(G_{x + a\Delta_u})=1$ and $n_u \geq n_l$}{$p_u \gets \prod_{uv\in z_u} e^{-\hat{\alpha}_u - \hat{\beta}_v}$}
Let $\Delta^* \in (\Delta_l,\Delta_u)$\;
\lIf{$\degr_m(G_{x + a\Delta^*})=1$ and $n_u = n_l = 1$} {
    $p_{int} \gets \Delta_u - \Delta_l$
}
$p^* \gets p_l + p_u + p_{\text{int}}$\;
\lIf{$p^* = 0$}{\Return $G$}
$u \sim \text{Unif}[0,p^*]$\;
\lIf{$u < p_l$}{$\Delta \gets \Delta_l$}
\lElseIf{$u < p_l + p_u$}{$\Delta \gets \Delta_u$}
\lElse{$\Delta \sim \text{Unif}(\Delta_l,\Delta_u)$}
\Return $G_{x + a\Delta}$\;
\end{algorithm}
\footnotesize
NOTES: $G$ is the current state of the chain, while $z$ is a set of coordinates of the form \eqref{eq:coords}. $\hat{\alpha}$ and $\hat{\beta}$ are $n$-vectors. Here, $G_x$ refers to the graph obtained by assigning weights $x$ to edges along the $k$-cycle, i.e. edges in $z$.
\end{figure}
\subsection{Combining $k$-cycles}
\label{sec:comb}

This section introduces an auxiliary variable method of selecting $k$-cycles. The $k$-cycle chosen at each iteration \textit{depends on the current state of the chain}. This allows for better mixing in both sparse and dense graphs.

\subsubsection{Motivating kernel selection}

To motivate our method, first recall the rewiring moves (discussed in Section \ref{sec:kcycles}) used for randomising unweighted directed graphs. The move selects two edges $u_1v_1$ and $u_2v_2$ randomly and attempts to replace them with $u_1v_2$ and $u_2v_1$. This is only possible if both $u_1v_2$ and $u_2v_1$ are not already in the edge set. If the network is sparse then the edge replacement has a high probability of succeeding. If it were dense, however, similar performance could be achieved by instead selecting $u_1v_1$ and $u_2v_2$ from the set of non-edges, rather than edges.

Closely related to the rewiring chains is a random walk that operates directly on the graph's adjacency matrix. This is often referred to as a checkerboard swap or tetrad move \citep{artzy-randrup_generating_2005, stone_checkerboard_1990, verhelst_2008, rao_1996, diaconis_1995}. It selects a $2\times 2$ submatrix at random and attempts to modify it with either 
\begin{equation*}
\left(\begin{matrix} +1 & -1 \\ -1 & + 1 \end{matrix}\right) 
\quad \text{ or } \quad
\left(\begin{matrix} -1 & +1 \\ +1 & -1 \end{matrix}\right),
\end{equation*}
and rejects if the resulting adjacency matrix is invalid. When successful, the move performs the same update as a rewiring move. However, the nodes are in effect chosen randomly, and so the performance will be poor in both sparse and dense graphs as the rejection rate is prohibitively high.

These local moves can be seen as \textit{Markov kernels}, and the method of selecting them is referred to as \textit{kernel selection}. The above discussion highlights the impact that kernel selection has on the practical efficiency of the resulting chain. Such considerations are exacerbated in the context of $k$-cycles; Proposition \ref{prop:condProb} shows that for $P$-almost all graphs, a $k$-cycle is unable to propose a new graph if there is more than one zero weight along the cycle. 

The naive approach would be to first sample $k \in \{2,\ldots,n\}$ and then each of $(u_1,\ldots,u_k)$ and $(v_1,\ldots,v_k)$ uniformly from the node set $N$ without replacement. The $k$-cycle is then formed as in \eqref{eq:coords}. This is analogous to the checkerboard/tetrad moves previously discussed, and is the approach used in \citet{gandy_2016}. In sparse graphs, however, the chance of only one zero cycle-weight is small, and the sampler can be prohibitively slow. Our aim in this section is to define a better strategy.

\subsubsection{Cycle selection as an auxiliary variable}
\label{sec:wg_auxvar}

The method of selecting a $k$-cycle can be interpreted as an auxiliary variable. Formally, let $\mathcal{Z}$ be the index set of all possible $k$-cycles, which corresponds to the collection of all sets of the form \eqref{eq:coords}. Note that permutations of \eqref{eq:coords} are considered equivalent here. We want the selected cycle $Z \in \mathcal{Z}$ to depend on the current state of the chain $G \in \mathcal{G}_m(d_0,s_0)$. Therefore, we let $Z \sim q_G$ where $q_G$ is the \textit{state-dependent distribution} of $Z$.

Selecting $k$-cycles this way does not generally maintain $\pi$-invariance. For this we must extend the state space to include the selection variable, and consider the properties of the Markov chain on the joint space. Formally, define the product space $\mathcal{Z} \times \mathcal{G}_m(d_0,s_0)$. The iterated integrals
\begin{equation*}
    \tilde{\pi}(g) := \int \int g(z,G) q_G(\diff z) \pi (\diff G),
\end{equation*}
for all non-negative Borel-measurable $g$ define a distribution $\tilde{\pi}$ on the joint space. Starting from $(z,G)$, the extended chain first samples $z' \sim q_G$, and proceeds to update the edge weights along the $k$-cycle defined by $z'$. If the extended chain is $\tilde{\pi}$-invariant then the marginal chain on $\mathcal{G}_m(d_0,s_0)$ is $\pi$-invariant. To maintain $\tilde{\pi}$-invariance, we simply need to adjust the full conditional of the weights along a $k$-cycle to additionally condition on $Z$.

\subsubsection{An efficient selection strategy}

We prioritise selecting cycles that have some chance of moving the chain to a new state. The main limitation of naively selecting $k$-cycles is sparsity. We therefore select new nodes by constructing an `alternating' cycle of out-edges and in-edges. We require two neighbourhood sets associated with each node. These are
\begin{align*}
    N^-_G(u) &:= \{v \in N : a_{uv} = 1\}\\
    N^+_G(u) &:= \{v \in N : a_{vu} = 1 \},
\end{align*}
which are the out-neighbours and in-neighbours of $u$ respectively. We start by sampling $k$ according to some distribution $\Gamma$ on $\{2,\ldots,n\}$. This should be positive everywhere to improve the stochastic stability of the sampler. We then sample $u_1v_1$ uniformly from the set of all edges in the graph. Starting from $u_1$, the remaining nodes are sampled by alternately walking through the out-neighbours and in-neighbours of the previous node. The full strategy is shown in Algorithm \ref{alg:selectkcycle}.

If the algorithm terminates at line 10, the resulting $k$-cycle has at most one zero weight and no fixed edges. If instead it returns $\emptyset$, then the strategy has failed to select a $k$-cycle and the Markov chain remains at the current state. If $k$ is small in comparison to the size of the network, nodes generally have more than two edges, and the pattern of prohibited edges $\mathcal{F}$ is not particularly complex, then the chance of failing to find a $k$-cycle is small. 

\begin{figure}[tbp]
\begin{algorithm}[H] 
\small
\caption{\small $k$-cycle selection strategy.}
\label{alg:selectkcycle}
\KwIn{$G$\;}
$k \sim \Gamma(\{2,\ldots,n\})$\;
$u_1v_1 \sim \text{Unif}(\{uv:a_{uv}=1\})$\;
\For{$l=2$ to $k$} {
 \lIf{$d^-_{u_{l-1}}(G) \leq 1$}{\Return{$\emptyset$}}
 $v_l \sim \text{Unif}(N^-_{u_{l-1}}(G) \setminus \{v_{l-1}\})$\;
 \lIf{$d^+_{v_l}(G) \leq 1$}{\Return{$\emptyset$}}
 $u_l \sim \text{Unif}(N^+_{v_{l}}(G) \setminus \{u_{l-1}\})$\;
}
\lIf{$\exists i,j$ distinct such that either $u_i=u_j$ or $v_i=v_j$}{\Return{$\emptyset$}}
\lIf{$u_kv_1 \in \mathcal{F}$}{ \Return{$\emptyset$}}
\Return{$\{u_1v_1, u_1v_2, \ldots u_k v_k, u_kv_1\}$}
\end{algorithm}
\footnotesize
NOTES: $G$ is the current state of the chain. Line 9 only checks if $u_kv_1$ is a prohibited edge because all other edges have positive weights, which by assumption implies they are not prohibited (see Section \ref{sec:formulate}).
\end{figure}

\subsection{The Overall Sampler}

One iteration of the full sampler tries to select a $k$-cycle with Algorithm \ref{alg:selectkcycle}. If successful, it then samples cycle-weights from its full conditional. Recall, however, that this must be adjusted to also condition on $Z$, which is the cycle selection variable. 

To do this, let $z \in \mathcal{Z}$ be a cycle chosen by Algorithm \ref{alg:selectkcycle}, and recall the notation where $G_x$ refers to the graph obtained by allowing cycle-weights to take the value $x$. For such cycles $\Delta_u - \Delta_l > 0$. Let $\gamma_l$ and $\gamma_u$ be the probability of selecting $z$ from $G_{x_l}$ and $G_{x_u}$ \textit{relative} to the chance of selecting it from some graph $G^* := G_{x + a\Delta}$ satisfying $\Delta \in (\Delta_l,\Delta_u)$. Assume also that there are no positive ties along the cycle-weights (positive ties are $P$-null). Then by following Algorithm \ref{alg:selectkcycle}, one can deduce that
\begin{align}
    \gamma_l &= \frac{M}{M-1}\frac{(\degr^-_{u_{1}}(G^*) - 1)(\degr^+_{v_{1}}(G^*) - 1)}{\sum_{uv \in z}  (\degr^-_{u}(G^*) - 1)(\degr^+_{v}(G^*) - 1) } \label{eq:alphal}\\
    \gamma_u &= \frac{M}{M-1}\frac{(\degr^-_{u_{2}}(G^*) - 1)(\degr^+_{v_{2}}(G^*) - 1)}{\sum_{uv \in z}  (\degr^-_{u}(G^*) - 1)(\degr^+_{v}(G^*) - 1) } \label{eq:alphau},
\end{align}
where $M$ is the total number of edges in $G^*$, and $u_1v_1$ and $u_2v_2$ are the edges corresponding to the zero weights in $x_l$ and $x_u$ respectively. Conditioning on the cycle selection strategy simply requires adjusting the boundary probabilities by the factors $\gamma_l$ and $\gamma_u$. This is shown in Algorithm \ref{alg:sampler}, which presents the full sampler.

\begin{figure}[tbp]
\begin{algorithm}[H]
\small
\caption{\small One iteration of the complete sampler.}
\label{alg:sampler}
\KwIn{$G$, $\hat{\alpha}$, $\hat{\beta}$}
$z \gets$ output of Algorithm \ref{alg:selectkcycle} applied to $G$\;
\lIf{$z = \emptyset$}{\Return G}
$G' \gets$ output of Algorithm \ref{alg:kcycle} to $G$, $z$ $\hat{\alpha}$ and $\hat{\beta}$, but adding $p_l \gets \gamma_lp_l$ and $p_u \gets \gamma_up_u$ after line 9 \;
\Return{G'}
\end{algorithm}
\footnotesize
NOTES: Input are the same as in Algorithm \ref{alg:kcycle}. The adjustment to Algorithm \ref{alg:kcycle} in line 3 accounts for the state-dependent selection of the $k$-cycle.
\end{figure}

\section{Stochastic Stability}
\label{sec:stochastic_stability}

Here we attempt to provide conditions under which the chain we have introduced is \textit{ergodic}; i.e. that it admits a unique invariant distribution. All proofs are provided in the Appendix. Ergodicity justifies the use of Monte Carlo averages through Birkhoff's ergodic theorem, which states that if $\{G_l\}_{l=0}^{\infty}$ is a Markov chain with unique invariant distribution $\pi$, and $h$ is integrable, then
\begin{equation*}
    \frac{1}{L} \sum_{l=0}^{L-1} h(G_l) \to \mathbb{E}(h(G))
\end{equation*}
as $L \to \infty$, and where the expectation is taken under $\pi$. 

Throughout this section we fix degrees and strengths $(d, s)$ and some $m > 0$, and consider the chain designed to sample from $\mathcal{G}_m(d,s)$. Proving irreducibility (Definition \ref{def:irreducible}) in the general case is difficult, and we are only able to provide results for $m \geq n$, i.e. when there is in effect no conditioning on degrees. Further work is required to establish conditions for $m < n$. Nonetheless, our simulations appear to show that the sampler can traverse a large number of topologies even when $m = 1$, and is capable of rapidly reaching the mode of $\pi$ when the initial state is far in the tail of $\pi$. An example of this is provided in Section \ref{sec:mixing}.

Now assume that $m \geq n$. It turns out that the chain is not ergodic for all strength sequences. Nonetheless, ergodicity holds for strength sequences produced by $P$-\textit{almost all} graphs. To formalise this idea, let $\{U_i \times V_i \subseteq N^2 \}_{i \in I}$ be a collection of non-empty and distinct sets for which
\begin{equation}
                \sum_{u \in U_i} s^-_u = \sum_{v \in V_i} s^+_v,
\label{eq:in_out_match}
\end{equation} 
and such that there does not exist non-empty $U \times V \subset U_i \times V_i$ on which \eqref{eq:in_out_match} holds. Let $\tilde{\mathcal{G}}_m(d,s) \subseteq  \mathcal{G}_m(d,s)$ be the set of graphs for which $uv$ is an edge only if $uv \in U_i \times V_i$ for some $i \in I$. Ergodicity will hold for admissible strengths, as defined in Definition \ref{def:admissible}.
\begin{definition}[Admissible Strengths]
The strength vector $s$ is admissible if $\{U_i\}_{i \in I}$ and $\{V_i\}_{i \in I}$ each partition $N$ and $\tilde{\mathcal{G}}_m(d,s)$ is non-empty.
\label{def:admissible}
\end{definition}
Notice that \eqref{eq:in_out_match} is always satisfied for $U_i=V_i=N$. If these are the unique sets satisfying \eqref{eq:in_out_match}, then admissibility simply requires that the reference set is non-empty. 

Proposition \ref{prop:inadmissible_null} verifies that if the network is generated from some law absolutely continuous with respect to $P$, then the observed strengths will not be inadmissible. It also has implications for the topology of graphs that produced the strengths.
\begin{proposition}
The set of graphs producing inadmissible sequences is $P$-negligible, where $P$ is as defined in Section \ref{sec:formulate}. Moreover, for any admissible $s$ the set of graphs not in $\tilde{\mathcal{G}}_m(d,s)$ is $P$-negligible.
\label{prop:inadmissible_null}
\end{proposition}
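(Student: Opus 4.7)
Both parts rest on the same mechanism: a balance identity $\sum_{u\in U}s^-_u=\sum_{v\in V}s^+_v$ is a linear relation in the edge weights which is either forced by the graph's topology, or cuts out a proper affine hyperplane of Lebesgue measure zero.

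\emph{Setup.} Since $P$ has density $f$ with respect to $\Lambda=\sum_A\lambda_A$, the conditional law of the positive weights given the topology $W^0=A$ is absolutely continuous with respect to $\|A\|_0$-dimensional Lebesgue measure. For a fixed topology $A$, the balance equation for $(U,V)\subseteq N^2$ rewrites as
\begin{equation*}
\sum_{uv:\,a_{uv}=1}w_{uv}\bigl(\mathbbm{1}[u\in U]-\mathbbm{1}[v\in V]\bigr)=0.
\end{equation*}
I call $(U,V)$ \emph{structural for $A$} when every coefficient in the above sum vanishes (equivalently, every edge $uv$ of $A$ satisfies $u\in U\Leftrightarrow v\in V$), and \emph{accidental} otherwise. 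An accidental balance confines the weights to a hyperplane of $\lambda_A$-measure zero in $[0,\infty)^{\|A\|_0}$.

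\emph{Part 1.} Since there are only finitely many candidate pairs $(U,V)$, the union over accidental $(U,V)$ of the hyperplanes they define is $\lambda_A$-null for each $A$. Off this null set, the balanced pairs coincide with the structural ones, and so the minimal balanced pairs are exactly the minimal structural pairs, a function of $A$ alone. I would then invoke a short lattice-theoretic lemma: if $(U,V)$ and $(U',V')$ are both structural for $A$, so are $(U\cap U',V\cap V')$, $(U\cup U',V\cup V')$ and $(U\setminus U',V\setminus V')$ whenever non-empty. Minimality then forces any two minimal structural pairs to be equal or disjoint on each coordinate, so the $U$-parts (respectively $V$-parts) are pairwise disjoint. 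Coverage of $N$ — that these sets actually \emph{partition} $N$ — is handled by combining this disjointness with the clause in Definition \ref{def:admissible} requiring $\tilde{\mathcal{G}}_m(d,s)$ to be non-empty, which forces the maximal structural pair $(N,N)$ to be exhausted by its minimal structural sub-pairs. Summing the accidental null sets over the finitely many topologies gives the first claim.

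\emph{Part 2.} Fix admissible $s$ with associated partition $\{U_i\times V_i\}_{i\in I}$. Suppose $G$ has $\str(G)=s$ and contains an edge $(u_0,v_0)$ with $(u_0,v_0)\notin\bigcup_i U_i\times V_i$. Because $\{U_i\}$ partitions $N$, there is a unique $i_0$ with $u_0\in U_{i_0}$, and by hypothesis $v_0\notin V_{i_0}$. Then the coefficient of $w_{u_0v_0}$ in the $(U_{i_0},V_{i_0})$ balance equation equals $\mathbbm{1}[u_0\in U_{i_0}]-\mathbbm{1}[v_0\in V_{i_0}]=1$, so this equation is accidental for the topology $A=W(G)^0$. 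Yet the equation must hold with equality, because its two sides are just $\sum_{u\in U_{i_0}}s^-_u$ and $\sum_{v\in V_{i_0}}s^+_v$. Hence the weights of $G$ lie on an accidental hyperplane, a $\lambda_A$-null event. Summing over the finitely many offending edges and compatible topologies yields the second claim.

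\emph{Main obstacle.} The measure-theoretic core is essentially immediate from the absolute continuity of the weight distribution within each topology. The delicate step is the combinatorial half of Part 1: verifying that minimal structural pairs really do decompose $N$ on both sides, and correctly handling the boundary cases (isolated sources or sinks, bipartite-like topologies) via the non-emptiness clause in the definition of admissibility. Once that lemma is in place, the rest of the argument is routine.
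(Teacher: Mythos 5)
Your proposal is correct in its essentials and follows the same route as the paper. Your ``accidental'' balance---one in which some edge weight enters with a nonzero coefficient---is exactly what the paper's proof calls a positive tie between disjoint sums of edge weights (the event that \eqref{eq:posequal} is positive on either side), and both arguments dispose of it as a $\lambda_A$-null hyperplane within each fixed topology. Your lattice closure lemma for structural pairs is the general form of the computation the paper carries out explicitly: it shows that if two overlapping minimal pairs both satisfy the balance \emph{structurally}, then $(U_1\setminus U_2,\,V_1\setminus V_2)$ is also balanced, contradicting minimality. Your Part 2 is likewise the paper's argument that an edge outside $\bigcup_{i}U_i\times V_i$ forces some balance equation to be accidental. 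The one place you depart from the paper is in splitting the partition condition into disjointness and coverage: the paper's proof treats only the overlap case ($U_1\cap U_2\neq\emptyset$ or $V_1\cap V_2\neq\emptyset$) and is silent on whether the minimal pairs cover $N$, whereas you flag coverage explicitly---but you also leave it as a sketch, and it is genuinely the delicate point (nodes with zero in- or out-strength yield singleton structural pairs that can overlap or fail to assemble into a unique decomposition without any positive tie occurring, so the non-emptiness clause does not obviously rescue it). Your write-up is therefore no less complete than the paper's own proof on that point, but neither argument fully closes it; everything else matches.
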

Loosely speaking, ergodicity of a chain requires that it is irreducible, aperiodic and recurrent. By construction, the chain is aperiodic and has a unique invariant distribution, which will imply that it is recurrent. Therefore, the work is in demonstrating the property of irreducibility, defined as follows. 

\begin{definition}[$\varphi$-irreducibility]
A Markov chain on $(\mathcal{X}, \mathcal{B})$ with kernel $\Phi$ is $\varphi$-irreducible if there exists a measure $\varphi$ on $\mathcal{B}$ such that for all $x \in \mathcal{X}$ and $A$ for which $\varphi(A) > 0$, there exists some $n > 0$ satisfying $\Phi^n(x,A) > 0$.
\label{def:irreducible}
\end{definition}
Definition \ref{def:irreducible} shows that one can choose an arbitrary measure $\varphi$ when establishing irreducibility. If the property is satisfied then there exists a unique (up to null sets) `maximal' irreducibility measure $\psi$, in the sense that any other irreducible measure must be absolutely continuous with respect to $\psi$. For more details on this, see \citet[article 4]{meyn_tweedie_glynn_2009}. The next proposition ties irreducibility to admissibility of the strength sequence.
\begin{proposition}
If the strength vector $s$ is admissible then the resulting Markov chain is $\varphi$-irreducible.
\label{prop:irreducibilty}
\end{proposition}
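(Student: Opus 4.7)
My plan is to take $\varphi := \pi|_{\tilde{\mathcal{G}}_m(d,s)}$, which by Proposition \ref{prop:inadmissible_null} differs from $\pi$ only on a $\pi$-null set, and to show that from any $G \in \tilde{\mathcal{G}}_m(d,s)$ and any set $A$ of positive $\varphi$-measure a finite sequence of $k$-cycle updates reaches $A$ with positive probability. Fixing a representative $G' \in A$ each of whose open neighbourhoods has positive $\pi$-measure, the argument has three ingredients: decomposing the required change from $G$ to $G'$ into $k$-cycles, scheduling those cycles so that each is selectable at the current state, and exploiting the continuous component of the full conditional to land in $A$.

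Step 1 is a circulation decomposition. Form the signed weight difference $\delta := W(G') - W(G)$; since both graphs belong to $\mathcal{G}_m(d,s)$, the balance relations $\sum_v \delta_{uv} = 0$ and $\sum_u \delta_{uv} = 0$ hold for every node, so $\delta$ is a circulation. Viewing $\delta$ as a signed weighting of the bipartite graph on $(N^{\mathrm{out}},N^{\mathrm{in}})$, this puts $\delta$ in the cycle space, and a standard decomposition writes it as a finite sum of signed alternating simple cycles, each of length $2k$ for some $k \in \{2,\ldots,n\}$ and each tracing the pattern $(u_1v_1,u_1v_2,u_2v_2,\ldots,u_kv_k,u_kv_1)$ with alternating signs $(+1,-1,+1,-1,\ldots)$. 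This is precisely the coordinate pattern of a $k$-cycle with increment vector $a$. Admissibility of $s$ enters here: the edges of both $G$ and $G'$ lie in $\bigcup_i U_i \times V_i$, so each elementary cycle stays within a single block and automatically respects the forbidden set $\mathcal{F}$.

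Step 2 applies the decomposed cycles in an appropriate order and with appropriate magnitudes. At each intermediate state, Algorithm \ref{alg:selectkcycle} can only select a cycle whose first $2k-1$ coordinates are existing edges, and Proposition \ref{prop:condProb} restricts the proposed $\Delta$ to the interval $[\Delta_l,\Delta_u]$. I would schedule the decomposed cycles by subdividing each into increments small enough that no coordinate being decreased is driven to zero prematurely and no `add' coordinate is needed out of order. Because $\Gamma$ is positive on every $k \in \{2,\ldots,n\}$ and each sampling step of Algorithm \ref{alg:selectkcycle} draws uniformly from a non-empty set, the required cycle is selected with positive probability; Proposition \ref{prop:condProb} then shows that the desired $\Delta$ is drawn from a set of positive $Q_t$-measure, either from the continuous $\alpha_t \mu_t$ component (when both endpoints of $L_t$ have exactly $2k-1$ positive coordinates) or from a boundary atom in the degenerate cases. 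A final round of small $k$-cycle updates within the common topology then brings the weights into any prescribed open neighbourhood of $W(G')$ with positive density, yielding $\Phi^n(G,A) > 0$.

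The main obstacle is the scheduling argument in Step 2: one must prove rigorously that the elementary cycles of Step 1 can always be subdivided and ordered so that at each intermediate graph the next cycle has at most one zero-weight coordinate and can be generated by Algorithm \ref{alg:selectkcycle}. Admissibility of $s$ confines all of the combinatorics inside a single block $U_i \times V_i$, and the freedom to further subdivide cycles, reverse their orientation, or interleave small compensating $k$-cycles to create temporary slack on depleted edges should make an inductive argument on the support of $\delta$ go through.
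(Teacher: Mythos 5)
Your Step 1 is essentially the combinatorial core of the paper's argument: the paper also forms the signed difference $D := W(G)-W(G')$, observes that its row and column sums vanish, and decomposes its support into alternating simple cycles of exactly the form \eqref{eq:coords}. However, the scheduling problem you flag in Step 2 is not a technicality to be handled ``by induction on the support of $\delta$'' --- it is where admissibility actually enters, and the paper resolves it with a different device. A cycle in your decomposition may pass through an edge $uv$ with $w_{uv}(G)=0<w_{uv}(G')$, and Algorithm \ref{alg:selectkcycle} can only propose a cycle whose coordinates are all existing edges except possibly the closing one; no amount of subdividing the increment fixes the fact that such a cycle is simply not selectable from $G$. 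The paper's solution is a separate lemma: every block $E_i$ induced by $U_i\times V_i$ is \emph{connected} in the bipartite graph $B(G)$ for every $G$ in the reference set (this is the consequence of admissibility), so any missing edge $uv\in U_i\times V_i$ closes an odd path into a selectable $k$-cycle with exactly one zero coordinate, and can be given a small positive weight in one step without disturbing the rest of the topology. All missing edges are added first; only then is the conformal cycle decomposition applied, now entirely along positive weights. Your proposal gestures at ``compensating cycles to create temporary slack'' but gives no mechanism, and without the connectivity lemma it is not clear one exists.

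The second gap is in your final step and in the choice $\varphi=\pi|_{\tilde{\mathcal{G}}_m(d,s)}$. Reaching every open neighbourhood of some $G'\in A$ does not give $\Phi^n(G,A)>0$ for arbitrary measurable $A$ with $\pi(A)>0$: such an $A$ may have empty interior within its topology stratum. To land in $A$ itself you need the $n$-step kernel to have a component with positive density, with respect to the reference measure $\Lambda$, on a \emph{full-dimensional} piece of the affine slice $\{\str = s\}$. A single $k$-cycle moves the state along a one-dimensional line, and the finitely many cycles appearing in your decomposition of $\delta$ span only the directions needed to connect $G$ to $G'$ --- in general far fewer than $\dim V$. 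Your phrase ``with positive density'' therefore asserts exactly what remains to be proved. The paper avoids this by \emph{not} taking $\varphi=\pi$: it exhibits an affine basis $(M_1,\ldots,M_d)$ of $V$ consisting of $k$-cycle directions, defines $\varphi$ as the pushforward of Lebesgue measure on a small cube $N_{\epsilon_0}\subset\mathbb{R}^d$ under the resulting parameterization around a strictly positive $W$, and then only needs open-set irreducibility (to reach $f_W(N_{\epsilon_0})$) plus $d$ further steps along the basis cycles to obtain a positive density on that cube. If you insist on $\varphi=\pi$ you must supply the analogous spanning argument; otherwise you should weaken $\varphi$ as the paper does, since Definition \ref{def:irreducible} permits any nontrivial $\varphi$.
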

Suppose $s$ is admissible. By construction, $\pi$ is an invariant distribution of the chain. Since the chain is also $\psi$-irreducible, it is recurrent \citep[Proposition 10.1.1]{meyn_tweedie_glynn_2009} and the invariant distribution is unique \citep[Proposition 10.4.4]{meyn_tweedie_glynn_2009}. This distribution is then the maximal irreducibility measure. This discussion is formalized in the following corollary.
\begin{corollary}
If $s$ is admissible then the resulting chain has $\pi$ as a unique invariant distribution. 
\end{corollary}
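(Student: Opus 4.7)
The proof is essentially a compilation argument: the corollary sits on top of Proposition \ref{prop:irreducibilty} and standard results in Markov chain theory (specifically \citet{meyn_tweedie_glynn_2009}), and the paragraph immediately before the statement already outlines the logical flow. My plan is therefore to verify each of the three ingredients required for uniqueness of the invariant distribution via those standard results, namely (i) $\pi$-invariance, (ii) $\psi$-irreducibility for a maximal irreducibility measure, and (iii) recurrence.

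First I would establish that $\pi$ is in fact an invariant distribution of the Markov chain defined by Algorithm \ref{alg:sampler}. This is by construction, but deserves a brief justification. The kernel operates on the augmented space $\mathcal{Z} \times \mathcal{G}_m(d_0, s_0)$: given the current graph $G$, a cycle $Z \sim q_G$ is drawn, and then the cycle-weights are resampled from the conditional distribution derived in Proposition \ref{prop:condProb} further conditioned on the approximate degree constraint and on $Z$. By the auxiliary-variable construction in Section \ref{sec:wg_auxvar}, the joint distribution $\tilde{\pi}(\diff z, \diff G) = q_G(\diff z)\,\pi(\diff G)$ is invariant for the extended chain, since the cycle-weight update is by definition reversible with respect to the full conditional (conditional on $Z$) of $\tilde \pi$ and leaves the $Z$-marginal fixed. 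Marginalising over $\mathcal{Z}$ then yields that $\pi$ is invariant for the chain on $\mathcal{G}_m(d_0, s_0)$.

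Next I would appeal to Proposition \ref{prop:irreducibilty}, which gives $\varphi$-irreducibility under admissibility of $s$. By the standard result that every $\varphi$-irreducible chain admits a maximal irreducibility measure $\psi$ (unique up to null sets; see \citet[Theorem 4.2.2]{meyn_tweedie_glynn_2009}), we can upgrade $\varphi$-irreducibility to $\psi$-irreducibility. Since $\pi$ is an invariant probability measure of a $\psi$-irreducible chain, \citet[Proposition 10.1.1]{meyn_tweedie_glynn_2009} gives that the chain is recurrent and that $\pi$ is a maximal irreducibility measure, i.e. $\pi \sim \psi$. Then \citet[Proposition 10.4.4]{meyn_tweedie_glynn_2009} ensures that the invariant probability measure of a recurrent chain is unique, which is exactly the conclusion of the corollary.

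I do not anticipate a serious technical obstacle here: the real work has already been done in Proposition \ref{prop:condProb} (the conditional distribution justifying $\pi$-invariance of the cycle move) and Proposition \ref{prop:irreducibilty} (establishing $\varphi$-irreducibility from admissibility). If anything warrants extra care, it is checking that the $\gamma_l, \gamma_u$ adjustments of Algorithm \ref{alg:sampler} do give reversibility with respect to the $\tilde{\pi}$-conditional of the cycle-weights given $Z$, so that the overall kernel on the joint space is genuinely $\tilde{\pi}$-reversible and therefore $\tilde{\pi}$-invariant — but this is a bookkeeping check rather than a conceptual difficulty, and is essentially already carried out in Section \ref{sec:comb}.
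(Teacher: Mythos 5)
Your proposal is correct and follows essentially the same route as the paper: the paper's justification is exactly the chain of observations that $\pi$ is invariant by construction, that admissibility gives irreducibility via Proposition \ref{prop:irreducibilty}, and that recurrence and uniqueness then follow from Propositions 10.1.1 and 10.4.4 of \citet{meyn_tweedie_glynn_2009}. Your additional remarks on $\tilde{\pi}$-invariance of the extended chain and the upgrade from $\varphi$- to $\psi$-irreducibility are elaborations of steps the paper treats as given, not a different argument.
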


\section{Experiments}
\label{sec:wg_sims}
Here we assess the performance of the sampler introduced in Section \ref{sec:sampling}. The sampler was coded in C++, and all experiments were performed on an Intel Core i5 2GHz CPU. We first empirically analyse its efficiency in Section \ref{sec:mixing}. This will demonstrate its ability to randomise large networks. In Section \ref{sec:commdet}, we use the sampler as a null model for detecting patterns in weighted networks and compare its performance to competing methods.

\subsection{Efficiency of the Sampler}
\label{sec:mixing}

We use the sampler to randomise a large, sparse, and highly structured network. The randomisation maintains strengths exactly and keeps all node degrees within $\pm$ 1 of the initial network. The graph to be randomised has $n = 10^3$ nodes, $250$ of which are assigned as `core' nodes, and $750$ as `periphery' nodes. The core is partitioned into 5 cliques of 50 nodes, while the periphery is partitioned into 75 cliques of 10 nodes. The subgraph of each clique is complete; that is every node has a directed link to all other nodes in the community. All 80 clique are connected by a single bridge to the rest of the network. Specifically, we add 79 `bridge' links by creating a single link from the first clique to the second, a link from the second to the third, etc. Weights of all edges are sampled independently from the exponential distribution with mean $10^3$.

The adjacency matrix of this network is shown in the top left panel of Figure \ref{fig:sim1_results}. This initial state is far from the mode of the posterior, which is concentrated on matrices similar to that in the bottom right panel. The network is particularly difficult to randomise because creating links between cliques requires first choosing a $k$-cycle that includes a bridge link. Nonetheless, the sampler reached the network in the bottom right panel in under 35 seconds. We also attempted the randomisation without the auxiliary kernel selection method of Section \ref{sec:wg_auxvar}, instead selecting cycles as in \citet{gandy_2016}. However, this approach was unable to reach the mode within a reasonable time. This demonstrates the importance of the state-dependent kernel selection for the efficiency of the sampler.

Recall that in Section \ref{sec:stochastic_stability} we considered the irreducibility of the chain. We were, however, only able to obtain results for $m \geq n$, i.e. when there is no conditioning on the degrees. This experiment has almost exactly conditioned on the degrees, and shows that the sampler remains capable of rapidly randomising the network, and also of traversing different graph topologies. Although this certainly does not constitute a proof of irreducibility, it warrants further research in this direction.


\begin{figure}[tbp]
    \centering
    \begin{subfigure}{0.8\textwidth}
    \begin{subfigure}{0.48\textwidth}
        \includegraphics[width=\linewidth]{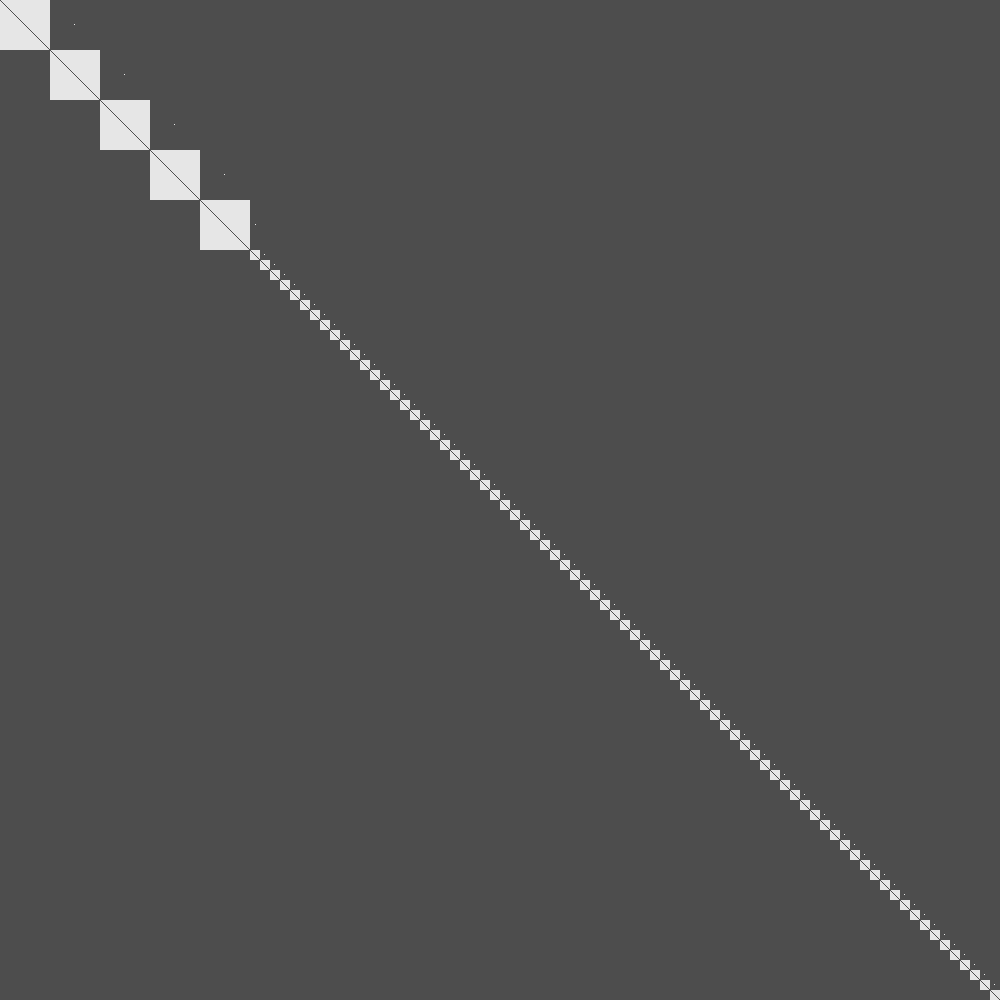}
        \caption{\small Initial Graph} \label{fig:sim1_0}
    \end{subfigure} \hspace*{\fill}
    \begin{subfigure}{0.48\textwidth}
        \includegraphics[width=\linewidth]{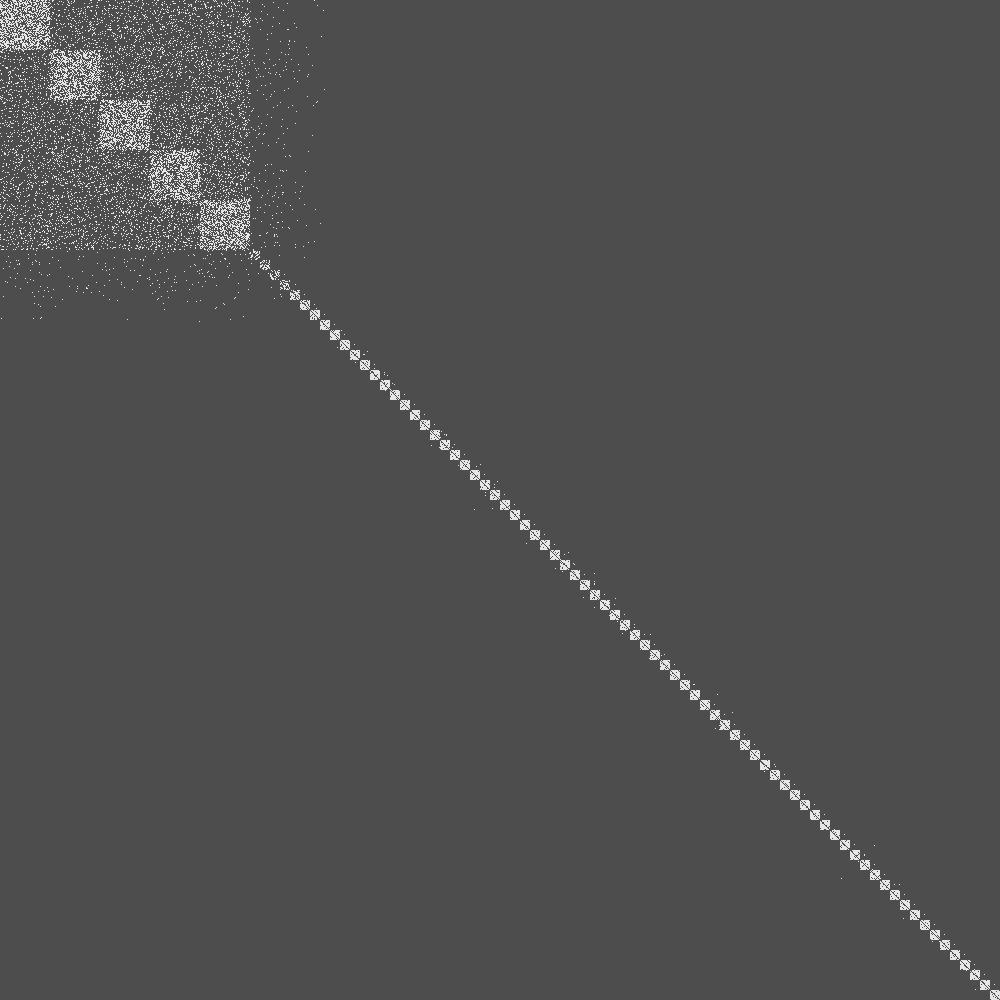}
        \caption{\small $10^6$ Iterations} \label{fig:sim1_1}
    \end{subfigure}
    \medskip
    \begin{subfigure}{0.48\textwidth}
        \includegraphics[width=\linewidth]{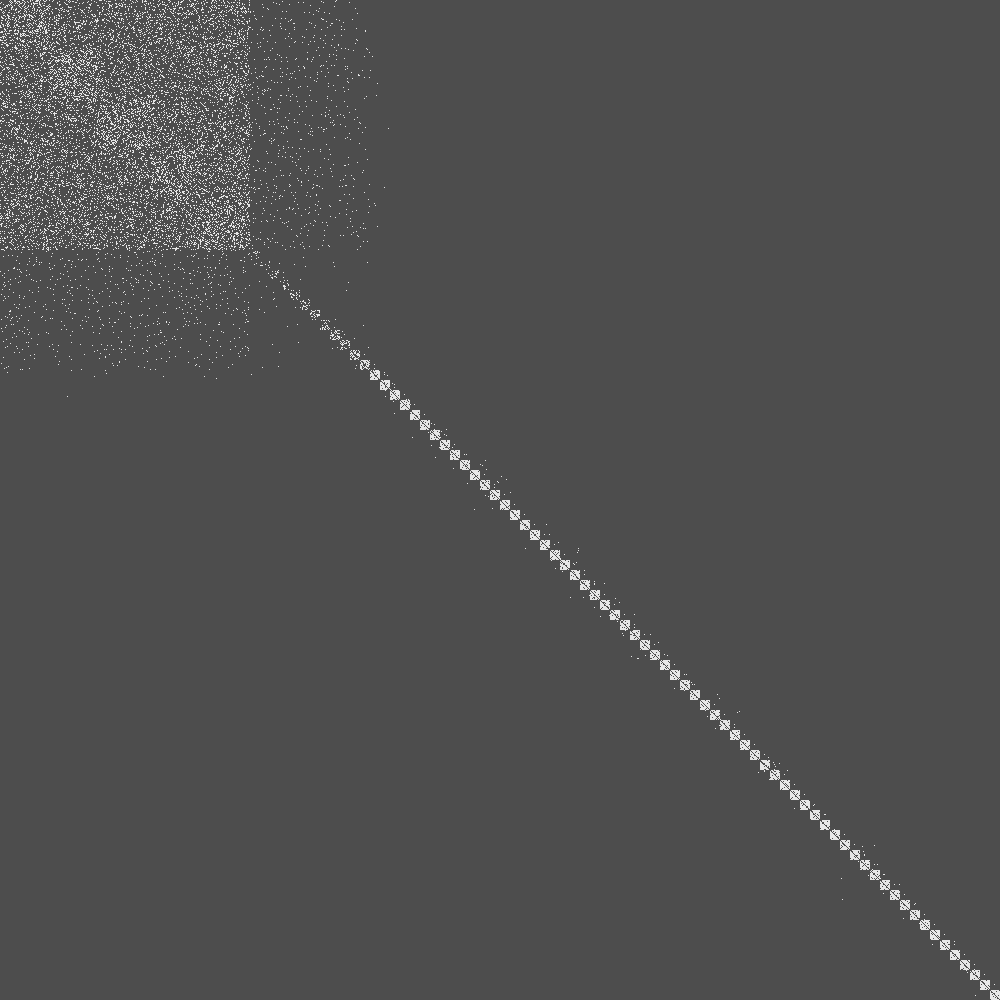}
        \caption{\small $2 \times 10^6$ Iterations} \label{fig:sim1_2}
    \end{subfigure} \hspace*{\fill}
    \begin{subfigure}{0.48\textwidth}
        \includegraphics[width=\linewidth]{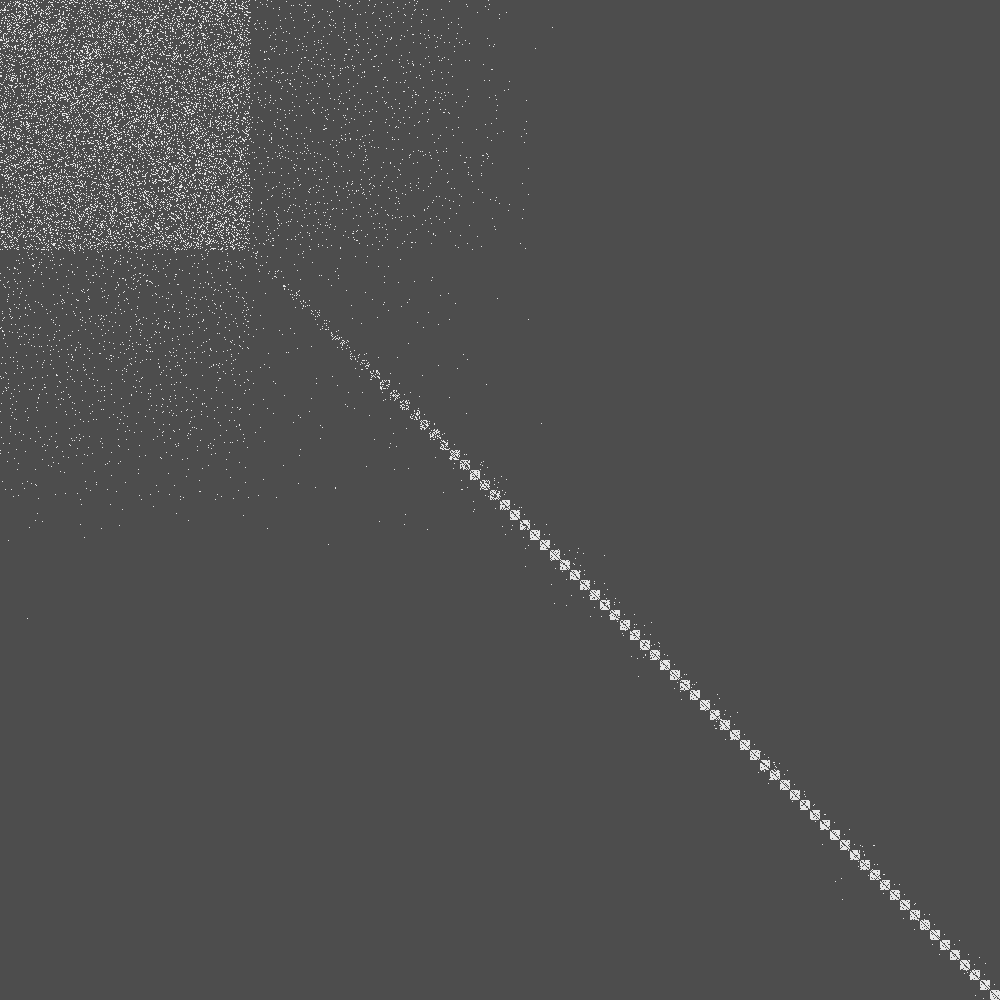}
        \caption{\small $3 \times 10^6$ Iterations} \label{fig:sim1_3}
    \end{subfigure}
    \medskip
    \begin{subfigure}{0.48\textwidth}
        \includegraphics[width=\linewidth]{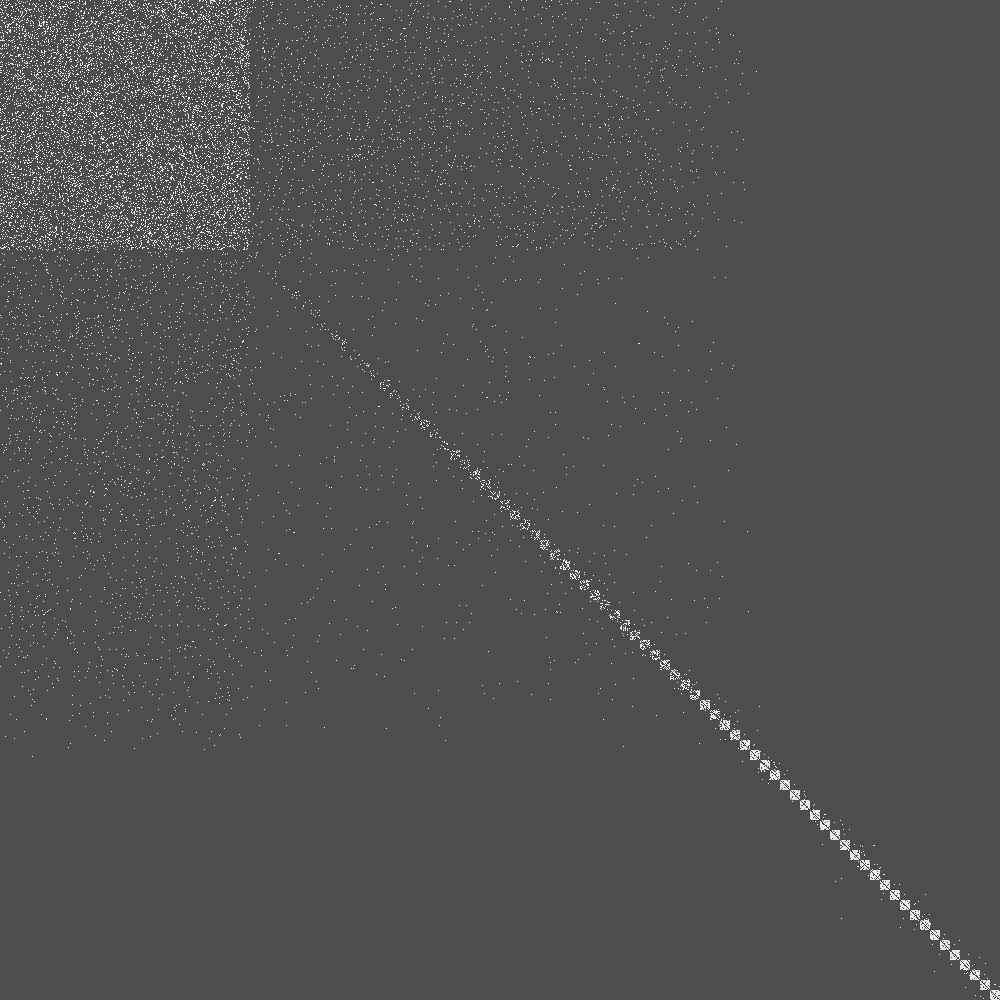}
        \caption{\small $4 \times 10^6$ Iterations} \label{fig:sim1_4}
    \end{subfigure} \hspace*{\fill}
    \begin{subfigure}{0.48\textwidth}
        \includegraphics[width=\linewidth]{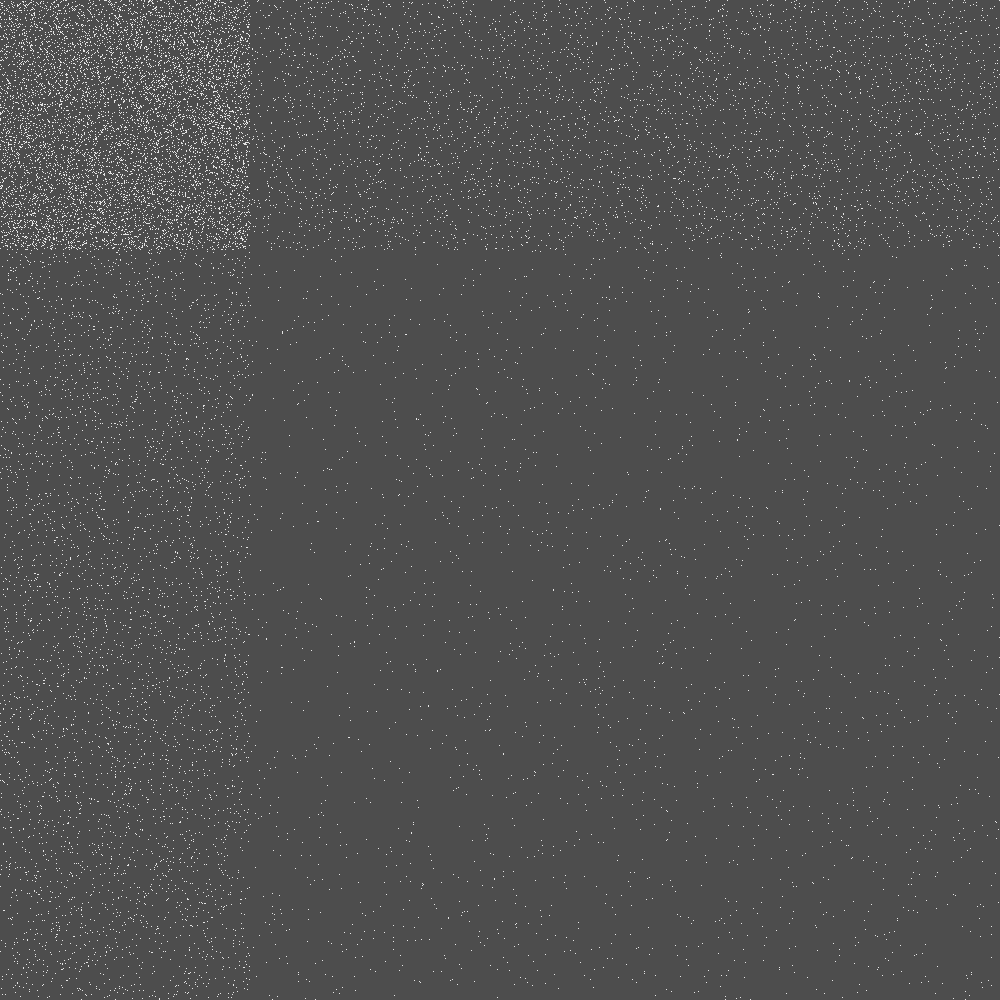}
        \caption{\small $15 \times 10^6$ Iterations} \label{fig:sim1_5}
    \end{subfigure}
    \end{subfigure}
    \caption{\small Adjacency matrices of networks at different stages of the randomisation.}
    \label{fig:sim1_results}
\end{figure}


\subsection{Significance of Community Structure in Benchmark Networks}
\label{sec:commdet}
This section uses the sampler to assess community structure in simulated networks, and compares the method's performance to alternative null models in a power study. The ground-truth community structure in the networks is known. Such `ground-truth' networks are usually simulated and are often termed benchmark models. Below, we introduce the benchmark model used and detail the parameters used for the simulation. We then describe the power study, competing methods and present the results.

\subsubsection{Degree and Strength Corrected Stochastic Block Model}

An early benchmark for unweighted and undirected graphs was suggested by \citet{girvan_community_2002}. Although simple, it does not account for heterogeneous community sizes and degrees. Modeling realistic degree distributions, which are heavy-tailed, is critical to the suitability of a benchmark. Heavy-tailed degrees can lead algorithms to group nodes with large degrees irrespective of their true memberships \citep{karrer_stochastic_2011}. \citet{lancichinetti_benchmark_2008} introduced the LFR benchmark, which overcomes these shortcomings. This was extended to weighted and directed graphs in \citet{lancichinetti_benchmarks_2009}. Here we use a benchmark model that accounts for heterogeneous group sizes, degrees and strengths. The benchmark is simple and similar in spirit to the \textit{weighted stochastic block model}s (WSBMs) proposed in \citet{aicher_learning_2015} and \citet{palowitch_significance-based_2018}. 

The model is a straightforward generalisation of the null model introduced in Section \ref{sec:weighted_motiv}, with an additional parameter to control tendency towards clustering. It is defined for $n > 1$ nodes and $K > 1$ possible community assignments. Recall from Section \ref{sec:weighted_motiv} the $n$-vectors $\alpha$, $\beta$, $\phi$ and $\psi$, where we constrained $\phi_u < 0$ and $\psi_u < 0$. Let $c$ be an $n$-vector representing a community partition of the nodes so that $c_u \in \{1,\cdots, K\}$.  The strength of community structure is controlled by a scalar parameter $\theta \geq 1$ that, roughly speaking, represents the relative edge formation probability (or average edge weights) for intra-community compared to inter-community links. Each potential edge $uv$ is associated with a factor $\theta_{c_uc_v}$, which is equal to $\theta$ if $c_u=c_v$, and is otherwise 1.

We describe the model at the edge-level, which will suggest a generative approach to drawing samples from it. As in Section \ref{sec:weighted_motiv}, edges are assumed conditionally independent given the parameters, and thus this description will fully define the likelihood for the network. The probability that an edge forms between two distinct nodes is
\begin{equation} \label{eq:probexist}
    P\{w_{uv} > 0\} =
    \min\left(\frac{e^{\alpha_u + \beta_v}}{e^{\alpha_u + \beta_v} + \lambda_{uv}}\theta_{c_uc_v},1\right),
\end{equation}
where $\lambda_{uv} = -\phi_u - \psi_v$. Conditional on edge existence, the weights are exponentially distributed with mean
\begin{equation*}
    \mathbb{E}[w_{uv} \mid w_{uv} > 0] = \frac{\theta_{c_uc_v}}{\lambda_{uv}}.
\end{equation*}

This model can be seen as a stochastic block model generalised to account for a wide range of degree and strength distributions. When $\theta=1$ it collapses to the null model of Section \ref{sec:weighted_motiv}. The model can be extended in multiple ways.  We have assumed no background nodes and no overlapping communities. For possible ways to extend in this direction, please see \citet{palowitch_significance-based_2018}. In addition, $\theta$ could be replaced with group-specific parameters.

\subsubsection{Simulation Parameters}
\label{sec:simparams}

The distribution of group sizes, degrees and strengths are chosen to reflect the heavy-tailed nature of these quantities in real networks. For this, we follow an approach that is close to \citet{lancichinetti_benchmarks_2009}. Formally, we iteratively draw group sizes from a discrete power law with exponent $\alpha_1$ truncated to between $s_{\text{min}}$ and $s_{\text{max}}$. Continue drawing new communities until the sum of all sizes is at least $n$, and then scale sizes proportionately until the total size is $n$. We then randomly assign nodes to the communities, such that the group sizes are respected.

We now describe all parameter values used in the simulations. The simulation requires applying our sampler thousands of times to different simulated networks. For this reasons, we consider only relatively small networks by letting $n=10^2$. For drawing group sizes, we let $\alpha_1 = 2$, $s_{\text{min}} = n / 5$ and $s_{\text{max}} = 3 s_{\text{min}} / 2$. The parameter $\theta$, which induces community structure, will be varied on a grid to assess the power of different methods at detecting deviations from the null.

\subsubsection{Competing Null Models}

The proposed method is compared to two alternative null models. The first is a weighted version of the Erd\H{o}s-Rényi model (WER). Let $G$ be a draw from the benchmark model, and $a_T := \sum_{u,v} a_{uv}(G)$ and $w_T := \sum_{u,v} w_{uv}(G)$ be the number of edges and total edge weights in $G$ respectively. WER draws independent and identically distributed edges according to 
\begin{equation*}
    P\{w_{uv} > 0\} = \frac{a_T}{n(n-1)},
\end{equation*}
with $u$ and $v$ distinct. Weights are then exponential with mean
\begin{equation*}
    \mathbb{E}(w_{uv} \mid w_{uv} > 0) = \frac{w_T}{a_T}.
\end{equation*}
The second model considered is the \textit{continuous configuration model} (CCM) introduced in \citet{palowitch_significance-based_2018}. This is a weighted extension of the Chung-Lu model \citep{chung_average_2002, chung_connected_2002} and unlike WER, has the advantage of matching the degrees and strengths of $G$ in expectation. Edges are formed independently with probability

\begin{equation*}
    p_{uv} := P\{w_{uv} > 0\} = \min\left(\frac{d_u^-(G) d_v^+(G)}{a_T},1\right),
\end{equation*}
and weights are exponential with mean
\begin{equation*}
    \mathbb{E}(w_{uv} \mid w_{uv} > 0) = \frac{s_u^-(G) s_v^+(G)}{w_T}\frac{1}{p_{uv}}.
\end{equation*}
CCM must permit self-loops else the degrees and strengths of $G$ are not correctly matched.

\subsubsection{The Power Study and Results}

The study is split into two phases. In both parts we are interested in assessing the power of the competing methods at correctly detecting community structure, where the level of such structure is controlled by $\theta$. This parameter will be varied from no clustering to levels where the clustering is quite apparent. Formally, we consider $\theta\in\{\theta_1,\ldots,\theta_{L}\}$ where $1 = \theta_1 < \ldots < \theta_{L}$. For each method and $\theta_l$, we repeatedly complete the following three steps.
\begin{itemize}
    \item Draw $G$ according to the benchmark model for $\theta_l$, and with all other parameters as described in Section \ref{sec:simparams}. Compute $t_0 = T(G)$, where $T$ is some statistic measuring the strength of clustering. 
    \item Draw samples $G^{(1)},\ldots,G^{(N)}$ using the method and let $t_1,\ldots,t_N$ be the associated test statistics.
    \item Compute the empirical significance ($p$-value) as in \eqref{eq:significance}.
\end{itemize}
This process is repeated $5 \times 10^3$ times in order to obtain a distribution over the significance statistics. If a method performs well then the $p$-values should be roughly uniformly when $\theta = 1$ and have high power for $\theta>1$.

The first phase of the study pretends that the true communities in the benchmark graphs are unknown, and applies a standard community detection algorithm to recover the structure. For this, we employ \textit{WalkTrap} \citep{pons_computing_2005}, however note that numerous alternatives could be used instead. The algorithm returns a graph partition, and we let $T$ be modularity computed on this partition. Figure \ref{fig:sim2_results} shows the comparative performance of different methods as $\theta$ is varied from $1$ to $2$ in increments of $0.2$. The figure shows that the proposed method outperforms the competing null models. In particular, when $\theta = 1$ the null model which conditions tightly on degrees ($\pm 1$) is close to uniform, as desired. This is not the case for either WER or CCM. Our method has a power advantage over both WER and CCM when the clustering effect is quite slight. This is expected: conditioning can act to improve relevance to the data at hand, and improve power against subtle alternatives. All methods perform well when $\theta$ is large.

The second phase looks to better understand the effect that approximate conditioning on the degrees has on the performance of the method. In order to illustrate this, we use a statistic that is deliberately sensitive to graph density. This is
\begin{equation*}
    T(G) := \sum_{u,v} a_{uv}(G) \delta(c_u,c_v),
\end{equation*}
where $\delta$ is the Kronecker delta function. This measures total within-community edges. In practice, this statistic would not be used because we have modularity, which explicitly measures clustering \textit{relative} to the configuration model and thus accounts for degree distributions. Nonetheless, it is not possible to make general graph statistics invariant to degree distributions, and so this example still has strong practical implications.

Figure \ref{fig:sim2_results} presents the results for the second phase. When $\theta = 1$, none of the null models are exactly uniform and there appears to be a bias towards high $p$-values. This is expected, as $T$ is highly sensitive to degrees. Nonetheless, when degrees are conditioned $\pm 1$ we get quite close to uniform because the effect of the unknown parameters (which were estimated by MLE) is minimised. Again, we see that approximate conditioning improves power against subtle alternatives.

\begin{figure}[ht]
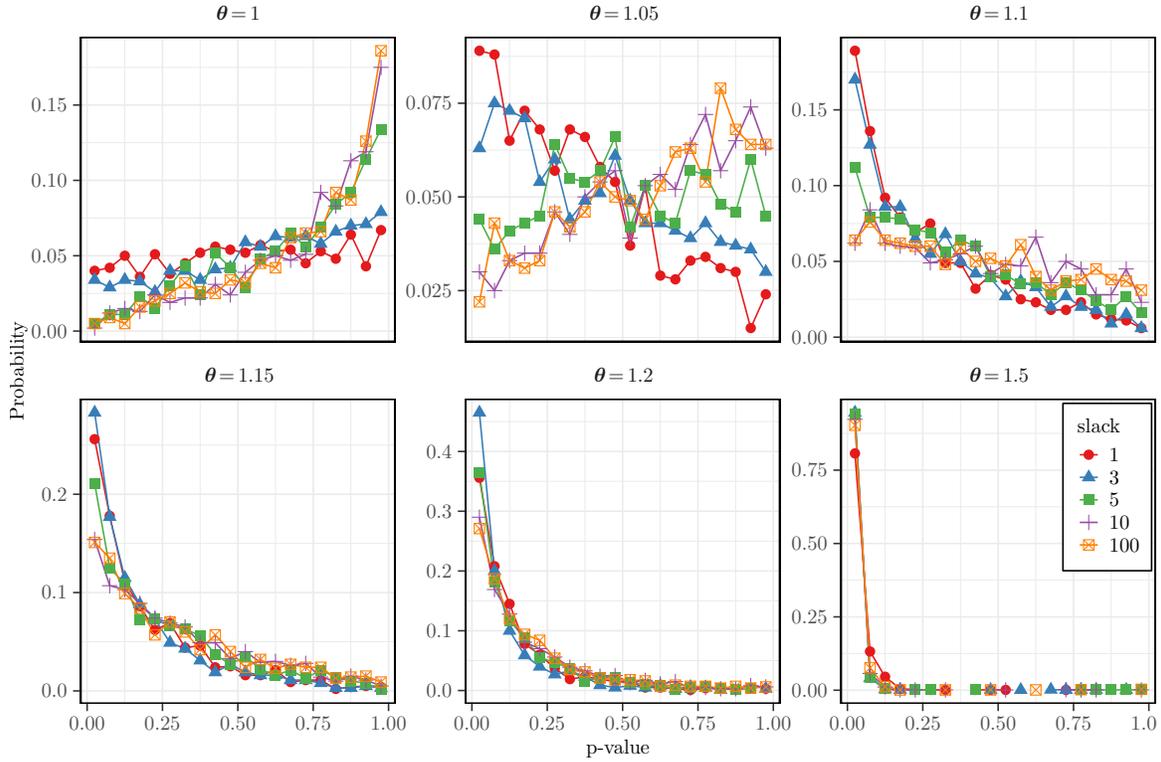
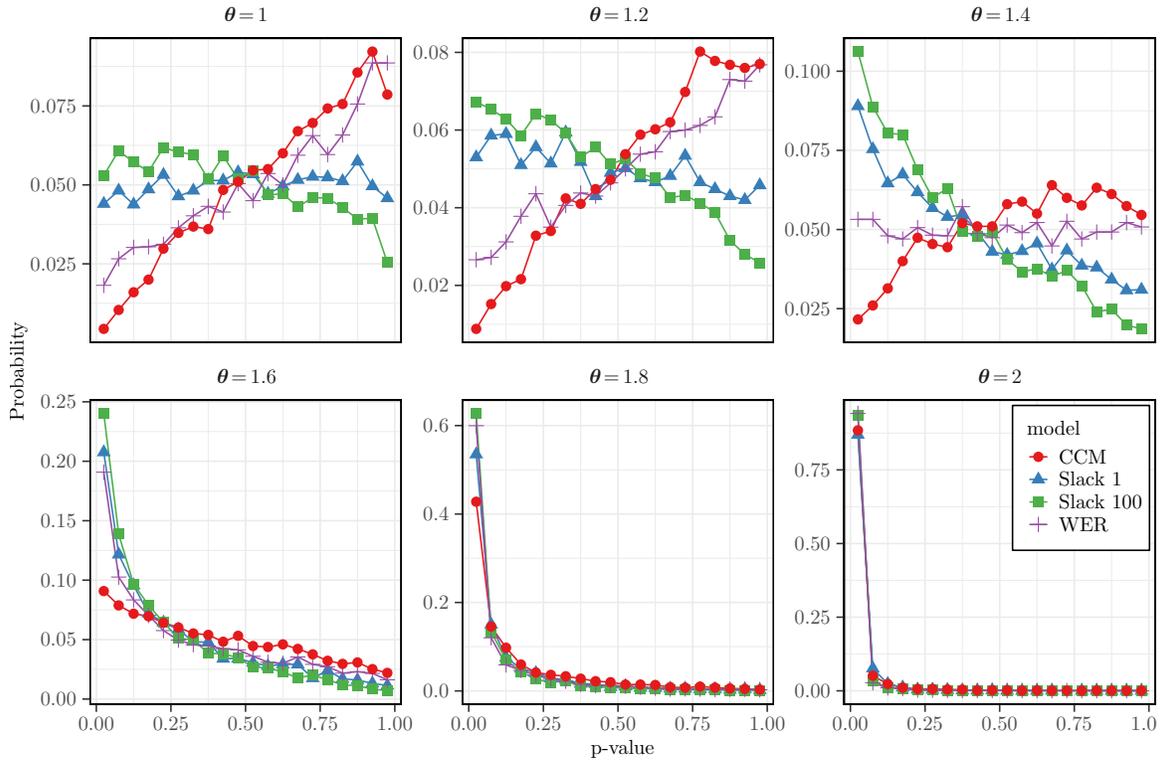

\centering
\begin{subfigure}{\linewidth}
    \centering
    \scalebox{1}{\includestandalone{figures/sim2a_results}}
    \caption{\small First phase.}
    \label{fig:sim2a_results}
    \end{subfigure}
\begin{subfigure}{\linewidth}
    \centering
    \scalebox{1}{\includestandalone{figures/sim2b_results}}
    \caption{\small Second phase.}
    \label{fig:sim2b_results}
\end{subfigure}
\caption{\small Comparative performance of different null models in the power study. \textit{Slack m} refers to the null model where node degrees are maintained $\pm m$. When $m=1$ degrees are almost exactly conditioned on. \textit{Slack 100} in effect performs no degree conditioning.}
\label{fig:sim2_results}
\end{figure}

\FloatBarrier

\section{Discussion}
\label{sec:wg_discussion}

This article has suggested a null model for weighted graphs. The model fixes node strengths and approximately fixes node degrees to within $\pm 1$ of the values of an observed network. It can be employed to assess the statistical significance of patterns observed in networks. We have proposed an MCMC sampler for drawing samples from the model, and have shown empirically that it is capable of sampling large and sparse networks. We performed an extensive power study to compare the performance of the null model to alternatives. The model compares favorably and appears capable of detecting subtle patterns, while also effectively controlling for nodal heterogeneity.

The work can be extended in a number of ways. We have only considered directed graphs, however the methods could in principle be extended to undirected graphs. From initial work in this direction, it appears that the undirected equivalent of $k$-cycles is not sufficient for maintaining irreducibility of the sampler. This challenge would need to be overcome. Another open question is whether the sampler is capable of reaching all possible graph topologies in the general case. Here, we only proved irreducibility when conditioning on strengths. Nonetheless, the simulation presented in Section \ref{sec:mixing} showed that the sampler is capable of traversing different graph topologies, even when conditioning tightly on degrees. The methods introduced in this article can also be used for Bayesian reconstruction of financial networks (see, for example, \citet{gandy_2016}). Since in sparse networks, our sampler is considerably more efficient than the original sampler in \citet{gandy_2016}, we expect the method to be useful in this field.

\bibliographystyle{apalike}
\bibliography{references}

\section{Appendix A: Proof of Proposition \ref{prop:condProb}}
\label{sec:condProbProof}

We start by defining several objects that are used in the proof. First note that $Q$ has unnormalised density $f$ with respect to the measure
\begin{equation} \label{eq:Qdom}
    \sum_{b \in \{0,1\}^{2k}} \lambda_b,
\end{equation}
where $\lambda_b$ is $\|b\|_0$-dimensional Lebesgue measure on $V_b := \{x \in \Omega_{2k} : x_i^0 = b_i \}$ and where as usual $0^0=0$. We parameterise $V_b$ with $\gamma_b: (0,\infty)^{d} \to V_b$, where $d := \|b\|_0$. Letting $(\sigma_1,\ldots,\sigma_d)$ be the ordered vector formed from $\{i : b_i=1\}$, we let
\begin{equation}
    \gamma_b(c) := c_1 e_{\sigma_1} + c_2 e_{\sigma_2} + \ldots + c_d e_{\sigma_d},
\end{equation}
where $e_i$ is the $i$\textsuperscript{th} standard basis vector for $\mathbb{R}^{2k}$. Finally, we we will also use the inclusion map $\iota_b: V_b \xhookrightarrow{} \Omega_{2k}$.

Our strategy is to partition $\Omega_{2k}$ into sets over which \eqref{eq:disintegration} can be verified. To do this, first let $O$ and $E$ be the set of all binary vectors of length $2k$ with at least one zero on the odd/even index, and no zeros on the even/odd index respectively. For example, $a \in O$ if and only if $a_{2i} = 1$ for all $i$, and $a_{2i+1} = 0$ for some $i$. Fix any $a \in O$ and $a' \in E$ and define 
\begin{equation} \label{eq:partition}
\Omega_{a,a'} := \{x \in \Omega_{2k}: x_i \leq \min_{j} \{x_{2j+1}\} \text{ iff } a_i = 0 \text{ and } x_i \leq \min_{j} \{x_{2j}\} \text{ iff } a'_i = 0\}.
\end{equation}
This set consists of vectors whose smallest odd elements coincide with the indices at which $a$ is zero, and whose smallest even elements coincide with the indices at which $a'$ is zero. A little thought shows that the sets \eqref{eq:partition} form a partition of $\Omega_{2k}$. We also define the pushforward of these sets by $\mathcal{T}_{a,a'} := T(\Omega_{a,a'})$.

Lemma \ref{lemma:resultonsubsets} shows that \eqref{eq:disintegration} holds when applied to these subsets. Most of our work will be in proving this lemma. 
\begin{lemma} \label{lemma:resultonsubsets}
Fix any set $\Omega_{a,a'}$ as in \eqref{eq:partition}. Then if $g: \Omega_{2k} \to \mathbb{R}$ is non-negative and measurable then for each $t \in \mathcal{T}_{a,a'}$, we have that $t \mapsto \int g(w) Q_t(\diff w)$ is measurable and
\begin{equation*}
\int_{\Omega_{a,a'}} g(x) Q(\diff x) = \int_{\mathcal{T}_{a,a'}}\int g(x) Q_t(\diff x)TQ(\diff t).
\end{equation*}
\end{lemma}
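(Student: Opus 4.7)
The strategy is to partition $\Omega_{a,a'}$ by topology stratum, apply a linear change of variables on each stratum, and match the resulting iterated integral to the claim. Since the measures $\{\lambda_b\}$ in \eqref{eq:Qdom} are mutually singular and $Q$ has density $f$ with respect to $\Lambda$,
$$\int_{\Omega_{a,a'}} g\, dQ \;=\; \sum_{b \in \{0,1\}^{2k}} \int_{\Omega_{a,a'} \cap V_b} g(x) f(x)\, \lambda_b(dx).$$
At most three strata contribute: the full-topology $b = \mathbf{1}$, the stratum $b_l$ whose zeros coincide with those of $a$ (containing the $x_l$ endpoints), and the stratum $b_u$ whose zeros coincide with those of $a'$ (containing the $x_u$ endpoints). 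On any other stratum, membership in $\Omega_{a,a'}$ would impose extra equalities among coordinates not encoded in $b$, and is therefore $\lambda_b$-negligible.

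On each contributing stratum I would apply a linear reparameterization using $\gamma_b$ from the preamble. For the boundary stratum $V_{b_l}$, every $x \in V_{b_l} \cap \Omega_{a,a'}$ coincides with $L_{T(x)}(\Delta_l(T(x)))$, so $T \circ \gamma_{b_l}$ is a linear bijection with unit Jacobian onto $\mathcal{T}_{a,a'}$, giving
$$\int_{V_{b_l} \cap \Omega_{a,a'}} g f\, d\lambda_{b_l} \;=\; \int_{\mathcal{T}_{a,a'}} g(x_l(t)) f(x_l(t))\, dt,$$
and analogously for $V_{b_u}$. For $V_{\mathbf{1}}$, which contributes nontrivially only when $\|a\|_0 = \|a'\|_0 = 1$, parameterize $x = L_t(\Delta)$ with $(t, \Delta) \in \mathcal{T}_{a,a'} \times (\Delta_l(t), \Delta_u(t))$; Fubini and the arc-length identity $ds = \sqrt{2k}\, d\Delta$ yield
$$\int_{V_{\mathbf{1}} \cap \Omega_{a,a'}} g f\, d\lambda_{\mathbf{1}} \;=\; \frac{1}{\sqrt{2k}} \int_{\mathcal{T}_{a,a'}} \int_{L_t} g(x) f(x)\, ds\, dt.$$

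Summing the three contributions, the left-hand side of the claim becomes $\int_{\mathcal{T}_{a,a'}} [f(x_l) g(x_l) + f(x_u) g(x_u) + \tfrac{1}{\sqrt{2k}} \int_{L_t} g f\, ds]\, dt$, which is exactly $\int_{\mathcal{T}_{a,a'}} \kappa_t \int g\, dQ_t\, dt$ by the definitions of $\kappa_t$, $\alpha_t$, and $\mu_t$. Taking $g \equiv 1$ identifies $TQ(dt) = \kappa_t\, dt$ on $\mathcal{T}_{a,a'}$, so the disintegration formula holds. The degenerate cases in \eqref{eq:conddistless} are handled by the same argument with the interior contribution absent; when $\|x_l\|_0 \neq \|x_u\|_0$ only the higher-dimensional stratum contributes (the other being null for its ambient Lebesgue measure), which is precisely why the surviving $Q_t$ degenerates to a single atom. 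Measurability of $t \mapsto \int g\, dQ_t$ follows from the explicit form of $Q_t$ via a monotone class argument. The main technical obstacle is the Jacobian bookkeeping: verifying that the change-of-variable determinants give the constants $1$, $1$, and $1/\sqrt{2k}$, so that the three strata assemble directly into $\kappa_t \int g\, dQ_t$ without any hidden normalization.
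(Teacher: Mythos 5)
Your overall route---stratify $\Omega_{a,a'}$ by topology, apply unit-Jacobian linear changes of variables on each stratum, rewrite the full-topology piece as a line integral carrying the $1/\sqrt{2k}$ factor, and read off $TQ(\diff t)=\kappa_t\,\diff t$ by taking $g\equiv 1$---is essentially the paper's proof. However, there is one genuine gap: the claim that ``at most three strata contribute'' and that every stratum other than $V_{\mathbf 1}$, $V_a$ and $V_{a'}$ is negligible is false. The stratum $V_{a''}$ with $a'':=a\odot a'$ (zeros exactly where $a$ is zero on the odd side \emph{and} where $a'$ is zero on the even side) is entirely contained in $\Omega_{a,a'}$ with no additional coordinate equalities imposed, and it carries positive $Q$-mass in general: under the DECM, for instance, the event that the two corresponding cycle edges are simultaneously absent has positive probability. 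On this stratum the level set $\{T=t\}$ is a singleton ($x_l=x_u$), and $Q_t$ must be the corresponding point mass---precisely the sub-case $\|x_l\|_0=\|x_u\|_0<2k-1$ of \eqref{eq:conddistless} that your argument never reaches. As written, your left-hand side $\int_{\Omega_{a,a'}}g\,\diff Q$ contains the mass on $V_{a''}$ but your assembled right-hand side does not, so the identity fails whenever $Q(V_{a''})>0$; relatedly, the identification $TQ(\diff t)=\kappa_t\,\diff t$ cannot hold on $T(V_{a''})$, where $TQ$ is singular with respect to Lebesgue measure on $\Omega_{2k-1}$.

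The repair is exactly the first step of the paper's proof: split off $V_{a''}$, observe that $T$ restricted to it is injective so the disintegration there is trivial (each $Q_t$ a point mass), and then run your three-stratum argument on $\Omega_{a,a'}\setminus V_{a''}$, where it does go through. A minor notational point: you write $\|a\|_0=\|a'\|_0=1$ for the case of one zero per side, whereas the paper's $\|\cdot\|_0$ counts nonzero entries, so this condition should read $\|a\|_0=\|a'\|_0=2k-1$.
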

Lemma \ref{lemma:resultonsubsets} makes the proof of Proposition \ref{prop:condProb} straightforward. We first prove the proposition assuming the lemma, and then proceed to prove the lemma.
\newline

\begin{proof} [Proof of Proposition \ref{prop:condProb}]
First recall that the sets \eqref{eq:partition} form a measurable partition of $\Omega_{2k}$. Moreover the sets $\mathcal{T}_{a,a'}$ are disjoint. To see this, first fix some $t \in \mathcal{T}_{a,a'}$. Then there exists $x \in \Omega_{a,a'}$ such that $T(x) = t$. Any other point in $\{T=t\}$ takes the form $L_t(\Delta)$ for some $\Delta$ as defined in  \eqref{eq:lineSegmentParam}. Such points must also lie in $\Omega_{a,a'}$. The overall result then simply follows from additivity of measure over disjoint measurable sets, and applying Lemma \ref{lemma:resultonsubsets}.
\begin{align*} 
\int g(x) Q(\diff x) &= \sum_{a\in O} \sum_{a' \in E} \int_{\Omega_{a,a'}} g(x) Q(\diff x) \\
& = \sum_{a\in O} \sum_{a' \in E} \int_{\mathcal{T}_{a,a'}}\int g(x) Q_t(\diff x)TQ(\diff t)\\
& = \int \int g(x) Q_t(\diff x)TQ(\diff t).
\end{align*}
\end{proof}

\begin{proof}[Proof of Lemma \ref{lemma:resultonsubsets}]

Recall that $\Omega_{a,a'}$ consists of all $x \in \Omega_{2k}$ whose smallest odd elements coincide with the indices at which a is zero, and whose smallest even elements coincide with the indices at which $a'$ is zero. If both the smallest odd and even element of $x$ are zero, then $x \in V_{a''}$, where $a'' := a \odot a'$ and where the operator $\odot$ denotes component-wise multiplication. If the smallest odd (even) element is zero, but smallest even (odd) is positive then $x \in V_a$ ($x \in V_{a'}$). If all elements are positive then $x \in V_{1}$, where $1$ is the unit vector of length $2k$. This shows that $\Omega_{a,a'}$ is partitioned by its intersection with $V_a$, $V_{a'}$, $V_{a''}$ and $V_{1}$. For notational convenience, we let $\Omega := \Omega_{a,a'}$ and $\mathcal{T} := \mathcal{T}_{a,a'}$ for what follows. 

In particular, it is easy to see that $V_{a''} \subset \Omega_{a,a'}$. Letting $\mathcal{T}_1 = T(V_{a''})$, we first demonstrate that 
 \begin{equation} \label{eq:firstcond}
\int_{V_{a''}} g(x) Q(\diff x) = \int_{\mathcal{T}_1}\int g(x) Q_t(\diff x)TQ(\diff t),
\end{equation}
and then
\begin{equation} \label{eq:secondcond}
\int_{\Omega \setminus V_{a''}} g(x) Q(\diff x) = \int_{\mathcal{T}\setminus \mathcal{T}_1}\int g(x) Q_t(\diff x)TQ(\diff t).
\end{equation}
The lemma then follows trivially follows from summing both sides of \eqref{eq:firstcond} and $\eqref{eq:secondcond}$.

To verify \eqref{eq:firstcond}, observe that $a''$ must have at least one zero on both its odd and even side. Fix any $x \in V_{a''}$. The parameterization \eqref{eq:lineSegmentParam} shows that the level set $\{T = T(x) \}$ consists only of $x$ itself, implying that $x = u_{T(x)} = v_{T(x)}$ for any $x \in V_{a''}$. Therefore
\begin{align*}
\int_{V_{a''}} g(x) Q(\diff x) &= \int_{V_{a''}} g(u_{T(x)}) Q(\diff x)\\
&= \int_{\mathcal{T}_1} g(x_l) TQ(\diff t) \\
&= \int_{\mathcal{T}_1}\int g(x) Q_t(\diff x)TQ(\diff t),
\end{align*}
where the second step uses a change of variables $t = T(x)$ and the third uses the definition of $Q_t$ in \eqref{eq:conddistless}.

Proving \eqref{eq:secondcond} is slightly more involved. First recall that points in $\Omega \setminus V_{a''}$ must lie in exactly one of $V_a$, $V_a'$ or $V_1$. Also recall that $Q$ has density $f$ with respect to \eqref{eq:Qdom}. Therefore
\begin{align}
\begin{split}
    \int_{\Omega \setminus V_{a''}}g(x)Q(\diff x) &= \int_{\Omega \cap V_{a}}g(x)f(x)\lambda_a(\diff x) + \int_{\Omega \cap V_{a'}}g(x)f(x)\lambda_{a'}(\diff x) \\ &+ \int_{\Omega \cap V_{1}}g(x)f(x)\lambda_1(\diff x),
\end{split}
\end{align}
where we have removed null integrals that result from distributing \eqref{eq:Qdom}. Define $f_b := f \circ \gamma_b$ and $g_b := g \circ \gamma_b$ for an arbitrary vector $b \in \{0,1\}^{2k}$. This allows us to write
\begin{equation} \label{eq:remove_measures}
    \int_{\Omega \setminus V_{a''}} g(x) Q(\diff x) = \int_{U_a} g_a(p) f_a(p) \diff p  + \int_{U_{a'}} g_{a'}(q) f_{a'}(q) \diff q + \int_{\Omega \cap V_1} g(x) f(x)\diff x,
\end{equation}
where $U_a := \{p: \gamma_a(p) \in \Omega\}$ and $U_{a'} := \{q: \gamma_{a'}(q) \in \Omega \}$. We must be careful in interpreting each integral on the right hand side of \eqref{eq:remove_measures}. The vector $p$ for example is of length $\|a\|_{0}$, while $q$ is of length $\|a'\|_{0}$.

We split the proof into three cases. 

\noindent \textbf{Case 1, }$\mathbf{\|a\|_{0}=\|a'\|_{0}= 2k-1:}$  This implies that $a_i = 0$ for exactly one odd $i$. Consider the function $T_a:\mathbb{R}^{2k-1} \to \mathbb{R}^{2k-1}$ given by
\begin{equation*}
    T_a(p) := (p_1 + p_2, \ldots,p_{i-2} + p_{i-1}, p_{i-1}, p_{i}, p_{i}+p_{i+1}, \ldots, p_{2k-2} + p_{2k-1})^t.
\end{equation*}
This function satisfies $T_a(p) = T \circ \iota_a \circ \gamma_a(p)$ for all $p \in (0,\infty)^{2k-1}$. It is linear, non-singular and its Jacobian $J$ has determinant one. To see this, observe that $J$ is block diagonal with matrices $J_1$ and $J_2$
where $J_1$ has dimension $i-1$. $J_1$ is upper triangular and $J_2$ is lower triangular, and both matrices have ones along the diagonal. Therefore
\begin{equation*}
\det(J) = \det(J_1) \times \det(J_2) = 1 \times 1 = 1.
\end{equation*}
Applying the change of variables formula to the integral gives
\begin{align*}
\int_{U_a} g_a(p)f_a(p)\diff p &= \int_{T(\Omega \cap V_a)}g(\gamma_a(T_a^{-1}(t)))f(\gamma_a(T_a^{-1}(t))) \diff t\\
&= \int_{{T(\Omega \cap V_a)}}g(x_l)f(x_l) \diff t.
\end{align*}
Precisely the same argument gives an analogous form for the second integral on the right of \eqref{eq:remove_measures}, but with $x_u$ in the integrand rather than $x_l$, and with $T(\Omega \cap V_{a'})$ as the integration range.

We deal with the final integral in \eqref{eq:remove_measures} using a transformation $T_1:\mathbb{R}^{2k} \to \mathbb{R}^{2k}$ defined by
\begin{equation*}
    T_1(x) := (x_1, x_1+x_2, x_2+x_3, \ldots, x_{2k-1}+x_{2k})^t.
\end{equation*}
This is again linear and non-singular. The Jacobian is lower triangular with ones along the diagonal, and so the determinant is one. Using change of variables
\begin{align*}
\int_{\Omega \cap V_{1}} g(x)f(x) \diff x &= \int_{T(\Omega \cap V_{1})}\left(\int g(T_{1}^{-1}(x_1,t)) f(T_{1}^{-1}(x_1,t)) \diff x_1\right) \diff t\\
&= \int_{T(\Omega \cap V_{1})} \left( \frac{1}{\sqrt{2k}} \int_{L_t} g(x) f(x) \diff s \right) \diff t\\
&= \int_{T(\Omega \cap V_{1})} \left( \frac{1}{\sqrt{2k}} \int_{L_t} f(x) \diff s \frac{\int_{L_t} g(x) f(x) \diff s}{\int_{L_t} f(x) \diff s} \right) \diff t\\
&= \int_{T(\Omega \cap V_{1})} \left(\alpha_t \int g(x) \mu_t(\diff x) \right) \diff t.
\end{align*}
In the second step the inner integral is rewritten as a line integral over $L_t$. Then, it is written in a form that allows application of the definition of $\mu_t$ in \eqref{eq:line_measure}. 

Putting this all together and grouping the integrals gives
\begin{align}
\int_{\Omega_{a,a'} \setminus V_{a''}}& g(x) P(\diff x)\\
&=\int_{\mathcal{T}^1_{a,a'}} \biggl( g(x_l)f(x_l) + g(x_u)f(x_u) + \alpha_t \int g(x) \mu_t(\diff x)\biggr) \diff t \label{eq:grouped}\\
&=  \int_{\mathcal{T}^1_{a,a'}} \int g(x) P_t(\diff x)\left( f(x_l) + f(x_u) + \alpha_t \right) \diff t \label{eq:subpt}\\
&=  \int_{\mathcal{T}^1_{a,a'}} \int g(x) P_t(\diff x) \biggl( f(x_l)+ f(x_u) + \int f(x_1,t)\diff x_1 \biggr) \diff t \\
&=  \int_{\mathcal{T}^1_{a,a'}} \int g(x) P_t(\diff x) \left( TP_a + TP_{a'} +TP_1\right)(\diff t) \\
&=  \int_{\mathcal{T}^1_{a,a'}} \int g(x) P_t(\diff x)TP(\diff t).
\end{align}
\eqref{eq:subpt} is obtained from \eqref{eq:grouped} by evaluating the integral of $g(x)$ with respect to $P_t$, where $P_t$ is defined in Proposition \ref{prop:condProb}.\\

\noindent \textbf{Case 2, } $\mathbf{\|a\|_{0}=\|a'\|_{0} < 2k-1:}$ Fix some $x \in \Omega \setminus V_{a''}$. Either the smallest even element, the smallest odd element, or both, are positive. Since the smallest element appears at more than one index, there must be ties between positive elements. Therefore, $\Omega \setminus V_{a''}$ must be a null set under $Q$, and so $Q_t$ may be defined arbitrarily on $\mathcal{T} \setminus \mathcal{T}_{1}$.\\

\noindent \textbf{Case 3, } $\mathbf{\|a\|_{0} \neq \|a'\|_{0}:}$ Now suppose that $\|a\|_{0} < \|a'\|_{0}$ and assume, for the moment, that the second and third integrals on the right hand side of \eqref{eq:remove_measures} are null over $\Omega$. Now 
\begin{align*}
\int_{\Omega_{a,a'}} g(x) P(\diff x) &= \int_{\Omega_{a,a'}} g(u_{T(x)}) P(\diff x) && \text{(using that $x = u_{T(x)}$ on $V_a$)}\\
&= \int_{\mathcal{T}_{a,a'}} g(x_l) TP(\diff t) && \text{(change of variables)}\\
&= \int_{\mathcal{T}_{a,a'}}\int g(x) P_t(\diff x)TP(\diff t) && \text{(definition of $P_t$)},
\end{align*}
as required. The analogous result is established for $\|a'\|_{l_0} < \|a\|_{l_0}$ using identical reasoning. \\
\end{proof}

\section{Appendix B: Proof of Proposition \ref{prop:inadmissible_null}}

\begin{proof}[\textbf{Proof of Proposition \ref{prop:inadmissible_null}}]
Begin by verifying the first statement; that the set of graphs producing inadmissible data is $\Lambda$-negligible. Fix some data $(d,s)$ and substitute the definition of $s^+_u$ and $s^+_v$ in \eqref{eq:in_out_match} to get 
\begin{align}
    \sum_{u \in U_i} \sum_{v \in N} w_{uv} &= \sum_{u \in N} \sum_{v \in V_i} w_{uv} \\
    \sum_{u \in U_i} \sum_{v \in N \setminus V_i} w_{uv} &= \sum_{u \in N \setminus U_i} \sum_{v \in V_i} w_{uv} \label{eq:posequal},
\end{align}
for each $i \in I$ and any $G \in \mathcal{G}_m(d,s)$. In the second step, we have simply removed summands common to both sides. If \eqref{eq:posequal} is positive on either side then it implies a disjoint sum of edge weights exactly equate. This event is $\Lambda$-negligible, and so it suffices to show that any graph producing inadmissible data must have \eqref{eq:posequal} positive for some $i \in I$.

First suppose that condition 1 of Definition \ref{def:admissible} is violated. Then $(U_i)_{i \in I}$ and $(V_i)_{i \in I}$ can be labeled so that $U_1 \cap U_2 \neq \emptyset$ and/or $V_1 \cap V_2 \neq \emptyset$. We show by contradiction that \eqref{eq:posequal} must be positive for some $i \in \{1,2\}$. Suppose first that \eqref{eq:posequal} is zero for $i \in \{1,2\}$. Then \eqref{eq:posequal} holds for $U' = U_1 \setminus U_2$ and $V' = V_1 \setminus V_2$. We show this by expanding the left hand side of \eqref{eq:posequal}
\begin{align}
    \sum_{u \in U_1} \sum_{v \in N \setminus V_1} w_{uv} &= \sum_{u \in U_1 
    \setminus U_2} \sum_{v \in N \setminus V_1} w_{uv} \\
    &= \sum_{u \in U'} \left( \sum_{v \in N \setminus V'} w_{uv} - \sum_{v \in V_2 \cap V_1} w_{uv} \right) \label{eq:expandsetminus}\\
    &= \sum_{u \in U'}\sum_{v \in N \setminus V'} w_{uv} - \sum_{u \in U'}\sum_{v \in V_2 \cap V_1} w_{uv} \label{eq:removefinal}\\
    &= \sum_{u \in U'}\sum_{v \in N \setminus V'} w_{uv} = 0.
\end{align}
In \eqref{eq:expandsetminus} we have used that $N \setminus V' = (V_1 \cap V_2) \cup (N \setminus V_1)$. To see how we remove the final summation in \eqref{eq:removefinal}, observe that if $u \in U'$ then $u \in N \setminus U_2$. Also if $v \in V_1 \cap V_2$ then $v \in V_2$. Therefore because \eqref{eq:posequal} holds for $i=2$ and by assumption is equal to zero, this summation must also be zero. The same reasoning as above can be applied to the right hand side of \eqref{eq:posequal} to show that 
\begin{equation*}
    \sum_{u \in U'} \sum_{v \in N \setminus V'} w_{uv} = \sum_{u \in N \setminus U'} \sum_{v \in V'} w_{uv} = 0,
\end{equation*} 
which implies that $U'$ and $V'$ satisfy \eqref{eq:in_out_match}. Since $U' \times V' \subset U_1 \times V_1$ this contradicts the definition of $U_1$ and $V_1$, and establishes that graphs for such data have positive ties, and thus lie in a $\Lambda$-negligible set.

Now suppose that the second condition of Definition \ref{def:admissible} is violated, i.e. $\tilde{\mathcal{G}}_m(d,s)$ is empty. This could be because the reference set \eqref{eq:targetspace} is empty, in which case no graph in $\mathcal{G}$ aligns with the data $(d,s)$. Suppose instead that \eqref{eq:targetspace} is non-empty. Any graph in \eqref{eq:targetspace} has some edge $uv$ such that
\begin{equation*}
uv \notin \bigcup_{i\in I} \left(U_i \times V_i\right).
\end{equation*}
This implies that \eqref{eq:posequal} is positive for some $i^* \in I$ and that the graph must have positive ties in its weight matrix. Therefore the set of such graphs is $P$-null. Moreover, this argument verifies the last statement in the proposition, which is that the set of graphs not in $\tilde{\mathcal{G}}_m(d,s)$ for some data $(d,s)$ is $P$-null.
\end{proof}

\section{Appendix C: Proof of Proposition \ref{prop:irreducibilty}}
\label{sec:irreducibilityproof}

Our strategy is to first establish open set irreducibility with respect to some topology on $\mathcal{G}_m(d,s)$. This result is stated in Lemma \ref{lemma:neighbourhood}. This is linked to $\psi$-irreducibility, which then establishes Proposition \ref{prop:irreducibilty}.

First we define objects used in the proofs. Fix some admissible data $(d,s)$, as in the proposition. Associate each graph $G \in \mathcal{G}_m(d,s)$ with a weighted, bipartite and undirected graph $B(G) := (R,C,W)$, with vertex sets $R := \{r_u : u \in N\}$ and $C := \{c_u : u \in N\}$, and with weight matrix defined through $w_{r_uc_v}(B) := w_{uv}(G)$. Recall the vertex sets $\{U_i\}_{i\in I}$ and $\{V_i\}_{i \in I}$ introduced in Section \ref{sec:stochastic_stability}, which are associated with the data $(d,s)$. We define sets
\begin{equation} \label{eq:ei}
    E_i := \{p_u : u \in U_i\} \cup \{q_v : v \in V_i\},
\end{equation}
for each $i \in I$. By the definition of admissibility (Definition \ref{def:admissible}) it is clear that these sets form a partition of $P \cup Q$.

We begin by stating and proving two lemmas which will help establish Lemma \ref{lemma:neighbourhood}.

\begin{lemma}
Fix some $i \in I$. The vertex set $E_i$ is connected in $B(G)$ for all $G \in \mathcal{G}_m(d,s)$.
\label{lemma:connectedH}
\end{lemma}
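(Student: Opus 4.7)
The plan is a proof by contradiction combined with a structural reduction. First I would invoke Proposition \ref{prop:inadmissible_null} to restrict attention to $G \in \tilde{\mathcal{G}}_m(d,s)$, since the complement is $P$-negligible and $s$ is assumed admissible. Inside $\tilde{\mathcal{G}}_m(d,s)$ every edge of $G$ lies in some $U_j \times V_j$, so by the partitioning property from Definition \ref{def:admissible} a node $u \in U_i$ has out-edges only into $V_i$ and a node $v \in V_i$ has in-edges only from $U_i$. This locality of edges within a single rectangle is what makes the rest of the argument available.

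Next I would suppose, for contradiction, that the subgraph of $B(G)$ induced on $E_i$ is disconnected, and split it into a non-trivial two-set partition $E_i = E' \sqcup E''$ with no cross edges. Writing $U' := \{u \in U_i : r_u \in E'\}$, $V' := \{v \in V_i : c_v \in E'\}$, and $U'' := U_i \setminus U'$, $V'' := V_i \setminus V'$, the disconnection condition says $w_{uv}(G) = 0$ whenever $(u,v) \in (U' \times V'') \cup (U'' \times V')$. The heart of the proof is then a short computation: combining locality with the disconnection condition gives $s^{-}_u = \sum_{v \in V'} w_{uv}$ for every $u \in U'$, and symmetrically $s^{+}_v = \sum_{u \in U'} w_{uv}$ for every $v \in V'$. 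Summing both identities and equating through the common double sum $\sum_{u \in U',\, v \in V'} w_{uv}$ yields
\begin{equation*}
    \sum_{u \in U'} s^{-}_u = \sum_{v \in V'} s^{+}_v,
\end{equation*}
so $(U', V')$ satisfies equation \eqref{eq:in_out_match}. Strictness of the containment $U' \times V' \subsetneq U_i \times V_i$ is immediate because $E''$ is non-empty, while non-emptiness of both $U'$ and $V'$ follows from positivity of node strengths (if $U'$ were empty the identity would force $s^{+}_v = 0$ for every $v \in V'$). Together these facts contradict the minimality clause defining $(U_i, V_i)$ in Section \ref{sec:stochastic_stability}, completing the proof.

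The hard part, I expect, is not the algebra but the bookkeeping: justifying the reduction to $\tilde{\mathcal{G}}_m(d,s)$ cleanly, and ruling out degenerate cases with zero-strength nodes that would allow singleton rectangles $\{u\} \times \{v\}$ with $w_{uv} = 0$ and isolated vertices in $B(G)$. Once one is inside $\tilde{\mathcal{G}}_m(d,s)$ with positive strengths, the minimality of $\{U_j \times V_j\}$ does the real work, and the argument is essentially a flow-style exchange on the bipartite incidence.
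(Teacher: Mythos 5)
There is a genuine gap, and it is your very first step: the reduction to $G \in \tilde{\mathcal{G}}_m(d,s)$ discards exactly the graphs for which the lemma is hardest and for which it is still needed. The lemma is invoked in Lemma \ref{lemma:posedge} with $G$ equal to the \emph{current state of the chain}, and $\varphi$-irreducibility (Definition \ref{def:irreducible}) quantifies over \emph{every} starting point $x \in \mathcal{X}$; likewise Lemma \ref{lemma:neighbourhood} takes the source graph to be an arbitrary element of $\mathcal{G}_m(d,s)$ and only restricts the \emph{target} $G'$ to $\tilde{\mathcal{G}}_m(d,s)$. A $P$-negligible exceptional set therefore cannot be thrown away. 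Outside $\tilde{\mathcal{G}}_m(d,s)$ your ``locality'' step fails outright: a node $u \in U_i$ may have out-edges landing outside $V_i$, so the identity $s^-_u = \sum_{v \in V'} w_{uv}$ is false and the balance computation on which the contradiction rests collapses. The paper's proof avoids this entirely by working with connected components of the whole bipartite graph $B(G)$: for any component $C$, the out-strengths of the row-vertices it contains must sum to the in-strengths of its column-vertices (no weight crosses a component boundary), so the corresponding node sets satisfy \eqref{eq:in_out_match}; minimality together with the partition property in Definition \ref{def:admissible} then forces each component to be a union of the sets $E_j$, hence each $E_i$ lies inside a single component. No assumption about where edges may lie is required.

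The algebra you do carry out --- converting a disconnection into a strictly smaller non-empty rectangle $U' \times V'$ satisfying \eqref{eq:in_out_match} and contradicting minimality --- is essentially the same balance-plus-minimality mechanism the paper's argument relies on implicitly, and it can be rescued: replace ``connected component of the subgraph induced on $E_i$'' by ``connected component of $B(G)$ meeting $E_i$'' and the locality assumption becomes unnecessary, so the argument then covers all of $\mathcal{G}_m(d,s)$. One caveat you flag but do not resolve: positivity of node strengths is not assumed anywhere in the paper, and zero-strength nodes genuinely produce singleton minimal rectangles and isolated vertices of $B(G)$; this degenerate case is glossed over by the paper's own proof as well, so it is a shared blemish rather than a defect specific to your argument.
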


\begin{proof}
Fix any $G \in \mathcal{G}_m(d,s)$ and let $C$ be a connected component of $B(G)$. It is easy to see that 
\begin{equation*}
    \sum_{u \in \{u : p_u \in C \}} s^{-}_u =  \sum_{v \in \{v : q_v \in C \}} s^{+}_v,
\end{equation*}
and so, by the definition of admissibility, $C$ must be the union of sets of the form \eqref{eq:ei}. This implies that $E_i$ must wholly lie within a connected component for all graphs in the reference set.
\end{proof}

The next lemma establishes that the Markov chain can move between different graph topologies. In particular, it shows that an edge can be added without altering the rest of the graph's topology. It will be used in the proof of Lemma \ref{lemma:neighbourhood}.

\begin{lemma} \label{lemma:posedge}
Let $uv \in U_i \times V_i$ for some $i \in I$, and $uv \notin \mathcal{F}$. Suppose the current state of the chain is $G$ and $uv \notin E(G)$. Fixing $\epsilon > 0$, these is positive probability of reaching some $G^*$ satisfying $W_{uv}(G^*) \in (0,\epsilon]$ and $E(G^*) = E(G) \cup \{uv\}$ in one iteration.
\end{lemma}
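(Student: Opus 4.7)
The plan is to exhibit a specific $k$-cycle that Algorithm \ref{alg:selectkcycle} selects with positive probability, and then show that running the update step on it has positive probability of producing the desired $G^*$. Since $u \in U_i$ and $v \in V_i$, the bipartite vertices $r_u$ and $c_v$ both belong to $E_i$, and by Lemma \ref{lemma:connectedH} this set is contained in a connected component of $B(G)$. A simple path in $B(G)$ from $r_u$ to $c_v$ therefore exists; write its vertex sequence as $r_{u_k}, c_{v_k}, r_{u_{k-1}}, c_{v_{k-1}}, \ldots, r_{u_1}, c_{v_1}$, with $u_k = u$, $v_1 = v$, and the $u_i$ (respectively $v_i$) pairwise distinct. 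Because $uv \notin E(G)$ the path must have length at least three, so $k \geq 2$, while distinctness of the bipartite vertices gives $k \leq n$. The $2k-1$ path edges correspond in $G$ to the positive-weight entries $\{u_i v_i\}_{i=1}^k \cup \{u_i v_{i+1}\}_{i=1}^{k-1}$, which together with $u_k v_1 = uv$ form exactly the coordinates of a $k$-cycle $z := \{u_1v_1, u_1v_2, u_2v_2, \ldots, u_kv_k, u_kv_1\}$.

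I next argue that Algorithm \ref{alg:selectkcycle} returns this $z$ with positive probability. The prior $\Gamma$ places positive mass at our chosen $k$; the starting edge $u_1v_1$ is drawn uniformly and is hence selected with probability $1/M > 0$; at each step $l \geq 2$, the neighbourhoods $N^-_G(u_{l-1}) \setminus \{v_{l-1}\}$ and $N^+_G(v_l) \setminus \{u_{l-1}\}$ are non-empty (because $u_{l-1}$ has out-edges to both $v_{l-1}$ and $v_l$, and $v_l$ has in-edges from both $u_{l-1}$ and $u_l$) and contain the desired successor; simplicity of the bipartite path rules out rejection at line 8; and $u_k v_1 = uv \notin \mathcal{F}$ rules out rejection at line 9.

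Finally, I analyse the weight update conditional on $z$. The cycle-weight vector $w$ satisfies $\Delta_u = \min_i w_{u_i v_{i+1}}(G) = 0$, attained at $u_k v_1$, and $\Delta_l = -\min_i w_{u_i v_i}(G) < 0$. Assuming the minimum of the odd-position weights is attained at a unique index, a $P$-a.s. property, the boundary vectors satisfy $n_l = n_u = 1$. Under $m \geq n$ the indicator $\degr_m$ is identically one, so Algorithm \ref{alg:kcycle} sets $p_{\text{int}} = \Delta_u - \Delta_l > 0$; the rescaling by $\gamma_l, \gamma_u$ in Algorithm \ref{alg:sampler} leaves $p_{\text{int}}$ unchanged, giving $p^* \geq p_{\text{int}} > 0$ and a positive conditional probability $p_{\text{int}}/p^*$ of drawing $\Delta \sim \text{Unif}(\Delta_l, 0)$. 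The chance that such a $\Delta$ lies in $[-\epsilon, 0)$ is $\min(\epsilon, -\Delta_l)/(-\Delta_l) > 0$, and for any such $\Delta$ each remaining cycle-weight $w_{u_i v_i} + \Delta$ and $w_{u_i v_{i+1}} - \Delta$ (for $i < k$) stays strictly positive, while $w_{u_k v_1} - \Delta = -\Delta \in (0,\epsilon]$; strengths are preserved by the construction of the $k$-cycle, and degrees lie in $N_m$ because $m \geq n$. Hence the resulting $G^*$ satisfies $E(G^*) = E(G) \cup \{uv\}$ and $w_{uv}(G^*) \in (0, \epsilon]$ with positive probability, as required. The main obstacle is essentially the bookkeeping: reading the bipartite path off as an alternating-sign $k$-cycle and checking that every neighbourhood, non-rejection, and non-negativity condition along Algorithms \ref{alg:selectkcycle}--\ref{alg:sampler} remains satisfied, plus the mild caveat that the argmin-uniqueness assumption is used to avoid the degenerate boundary case of Proposition \ref{prop:condProb}.
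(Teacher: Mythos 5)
Your proof is correct and follows essentially the same route as the paper's: use Lemma \ref{lemma:connectedH} to extract a simple path from $r_u$ to $c_v$ in $B(G)$, close it with the zero-weight coordinate $uv$ to obtain a $k$-cycle that Algorithm \ref{alg:selectkcycle} selects with positive probability, and then sample $\Delta$ in a suitable sub-interval of length at most $\epsilon$. You simply carry out in full the bookkeeping (neighbourhood non-emptiness, the a.s.\ uniqueness of the minimising weight, and the positivity of $p_{\text{int}}$) that the paper's terser argument leaves implicit.
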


\begin{proof}[Proof of Lemma \ref{lemma:posedge}]
Since $uv \in U_i \times V_i$, $p_u$ and $q_v$ must belong to $E_i$. By Lemma \ref{lemma:connectedH}, $p_u$ and $q_v$ must be connected by a simple path in $B(G)$. Because $B(G)$ is bipartite, the path must have odd length, and when including $p_uq_v$, defines a $k$-cycle with one zero entry and no forced edges. The $k$-cycle selection strategy defined in Algorithm \ref{alg:selectkcycle} gives positive probability to all such $k$-cycles. Sampling this $k$-cycle and $\Delta \in (0,\epsilon]$ yields a graph with the required properties. Sampling $\Delta$ in this range has positive probability because the density $f$ is assumed to be positive everywhere. 
\end{proof}

Lemma \ref{lemma:neighbourhood} shows that the Markov chain is open set irreducible, where the open sets are those induced by the metric

\begin{equation} \label{eq:metric}
    d(G_1,G_2) := \max_{uv \in N^2}\mid w_{uv}(G_1) - w_{uv}(G_2) \mid + \rho(G_1,G_2),
\end{equation}
where $\rho(G_1,G_2) = 1$ if $G_1$ and $G_2$ hae differing topologies, and is otherwise zero. It is easy to check that this really is a metric. Before stating the lemma, we recall the definition of $\tilde{\mathcal{G}}_m(d,s)$ from Section \ref{sec:stochastic_stability}.

\begin{lemma} \label{lemma:neighbourhood}
Let the current state of the chain be $G \in \mathcal{G}_m(d,s)$, and fix any $G' \in \tilde{\mathcal{G}}_m(d,s)$, and $\epsilon > 0$. There exists some integer $n$ for which there is positive probability of reaching an $\epsilon$-neighbourhood (under \eqref{eq:metric}) of $G'$ within $n$ steps.
\end{lemma}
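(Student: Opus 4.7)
The plan is to construct an explicit finite sequence of $k$-cycle moves transforming $G$ into a graph within $\epsilon$ of $G'$ under the metric \eqref{eq:metric}, with each move having positive probability. Because we are in the regime $m \geq n$, the approximate degree constraint is vacuous, so we need only preserve strengths and non-negativity of weights. The construction proceeds in three phases: first add the missing edges of $E(G')$, then remove the excess edges, then fine-tune weights.

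In the first phase I enumerate $E(G')\setminus E(G) = \{e_1,\ldots,e_q\}$. Since $G'\in\tilde{\mathcal{G}}_m(d,s)$, each $e_i$ lies in some $U_{j_i}\times V_{j_i}$, so Lemma \ref{lemma:posedge} gives positive probability of adding $e_i$ in one step with weight in $(0,\delta]$, for any fixed small $\delta>0$. Choosing $\delta$ smaller than half the current minimum edge weight prevents the accompanying $k$-cycle from accidentally zeroing another positive weight. After $q$ such moves we reach $G^{(1)}$ with $E(G^{(1)})\supseteq E(G')$, with the newly-added edges carrying weight at most $\delta$.

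In the second phase, I iteratively eliminate each $uv\in E(G^{(1)})\setminus E(G')$. Connectedness of the bipartite representation $B$ (Lemma \ref{lemma:connectedH}) provides a $k$-cycle containing $uv$ aligned with the decreasing boundary. Algorithm \ref{alg:selectkcycle} selects this cycle with positive probability, and sampling the corresponding boundary value of $\Delta$ drives $w_{uv}$ to zero; because $w_{uv}\leq\delta$ is small, the perturbations to the other $2k-1$ coordinates keep them strictly positive. After processing every excess edge we arrive at $G^{(2)}$ with $E(G^{(2)})=E(G')$. The third phase addresses the residual $D := W(G^{(2)})-W(G')$, which is supported on $E(G')$ and has vanishing row and column sums. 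A standard bipartite flow decomposition writes $D=\sum_{j=1}^{J} c_j M_j$ where each $M_j$ is the alternating $\pm 1$ signature of a simple cycle of length $2k_j$ in $B(G')$; distinct row and column labels make each $M_j$ realizable as a $k_j$-cycle whose $\Delta$ lies strictly inside $(\Delta_l,\Delta_u)$. Algorithm \ref{alg:kcycle} then samples $\Delta$ uniformly over a positive-length interval, giving positive probability to any target sub-interval; subdividing each $c_j M_j$ into sufficiently many small-magnitude steps keeps all intermediate weights non-negative. Executing this finite sequence brings us within the required $\epsilon$-ball.

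The main obstacle sits in the third phase: producing a circulation decomposition whose elementary cycles are legitimate $k$-cycles (distinct row and column vertices in $B$, so Algorithm \ref{alg:selectkcycle} can propose them), and verifying that each intermediate state remains in $\mathcal{G}_m(d,s)$ with enough edges surviving that the algorithm grants positive selection probability to the cycle required at each step. The combinatorial existence of the decomposition is standard, but transferring it into a sequence of positive-probability chain moves with controlled magnitudes is where the bookkeeping is concentrated. A secondary subtlety in Phase 2 arises if $G\notin\tilde{\mathcal{G}}_m(d,s)$: cross-class edges may create components in $B(G)$ larger than those in $B(G')$, and one has to check that the removal moves can still be scheduled without isolating any required path in $B$.
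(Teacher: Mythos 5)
Your Phase 1 matches the paper's first step exactly (repeated application of Lemma \ref{lemma:posedge} to make every edge of $E(G')$ present with small weight), but Phases 2 and 3 diverge from the paper and Phase 2 contains a genuine error. The edges you remove in Phase 2, namely $E(G^{(1)})\setminus E(G')$, are \emph{original} edges of $G$: only the edges added in Phase 1 carry weight at most $\delta$, and those all lie in $E(G')$ and are never removed. So the claim ``because $w_{uv}\leq\delta$ is small, the perturbations to the other $2k-1$ coordinates keep them strictly positive'' is false -- zeroing such an edge subtracts its full (arbitrarily large) weight from every other decreasing coordinate of the cycle, which can drive them negative or far past their target values in $G'$, and nothing in your construction controls this. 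A secondary weakness in the same phase: Lemma \ref{lemma:connectedH} gives connectivity of $E_i$ in $B(G)$, but that does not by itself yield a cycle \emph{through} a given edge (a cut edge of $B(G)$ lies on no cycle), so the existence of the removal move is not established.

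The paper avoids both problems by merging your Phases 2 and 3 into a single peeling argument on the signed difference graph $H$ with weights $D=W(G)-W(G')$. Because the row and column margins of $D$ vanish, every nonempty $E(H)$ contains a cycle alternating between surplus (red) and deficit (blue) edges; after your Phase 1 all of its edges are positive in the current graph, so it is a valid $k$-cycle. Moving by $\Delta'=-\min_{uv}|D_{uv}|$ along such a cycle decreases exactly the surplus edges, each by no more than its own surplus, so all weights stay above their targets (hence nonnegative), and at least one entry of $D$ is zeroed per move, giving termination in $|E(H)|$ steps; a final continuity argument replaces the exact $\Delta'_i$ by intervals $\Delta'_i+[-\epsilon_0,\epsilon_0]$ of positive probability. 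If you want to salvage your three-phase structure, Phase 2 must be replaced by moves of this ``minimum-discrepancy along an alternating cycle of $H$'' type; your Phase 3 flow decomposition is then subsumed, since the same peeling handles the residual circulation on $E(G')$.
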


\begin{proof}[Proof of Lemma \ref{lemma:neighbourhood}]
Form a signed graph $H := (N,D)$, where $D = (d_{uv})$ is a matrix of possibly negative weights that satisfy $d_{uv} := w_{uv}(G) - w_{uv}(G')$. Label $uv$ \textit{red} if $d_{uv} > 0$ and \textit{blue} if $d_{uv} < 0$. $G$ and $G'$ are equal if and only if $E(H)$ is empty. Red edges must be in $E(G)$, however blue edges may not be in $E(G)$. Therefore, begin by using Lemma \ref{lemma:posedge} repeatedly to adjust $G$ so that all blue edges in $H$ are in $E(G)$. This Lemma can be applied because we have assumed that $G' \in \tilde{\mathcal{G}}_m(d,s)$. This implies that if $uv$ is blue then it must belong to $U_i \times V_i$ for some $i \in I$, because otherwise $w_{uv}(G') = 0$, which contradicts the requirement that $w_{uv}(G) < w_{uv}(G')$ for blue edges.

Call a $k$-cycle \textit{alternating} if its vertex pairs \eqref{eq:coords} alternate between red and blue edges , when interpreted as part of $H$. As long as $E(H)$ is non-empty, one can always form an alternating $k$-cycle. To see this, note that the in- and out-strengths of vertices in $H$ are uniformly zero. Therefore, if $uv$ is red ($d_uv > 0)$, then there must exist blue $wv$ for which $d_{wv} < 0$. Hence walk along alternating edges until returning to a vertex for the first time, forming an alternating $k$-cycle. 

Fix one such $k$-cycle $z_1$ ordered so that the first edge is \textit{red}. All edges in the cycle are positive and do not belong to $\mathcal{F}$. Let
\begin{equation*}
    \Delta'_1 := - \min_{uv \in z_1}\{\|D_{uv}\|\}
\end{equation*}
along the cycle. If we sampled $\Delta = \Delta_1$ exactly along $z$, this would remove an edge from $E(H)$. Repeating the process at most $d$ times, where $d$ is the size of $E(h)$, yields $k$-cycles $z_1,\ldots,z_d$ and $\Delta'_1,\ldots, \Delta'_d$, after which we reach $G'$. Therefore there must exist some $\epsilon_0 > 0$ such that sampling $\Delta_i \in \Delta'_i + [-\epsilon_0,\epsilon_0]$ sequentially along these cycles gives a graph in an $\epsilon$-neighbourhood of $G'$.
\end{proof}

To link open set irreducibility (Lemma \ref{lemma:neighbourhood}) with $\psi$-irreducibility, it is helpful to view the reference set $\mathcal{G}$ as a subset of a vector space. This will provide a geometric interpretation to the $k$-cycles that form the basis of our Markov chain. This will motivate a measure $\varphi$ on $\mathcal{G}$ for which it is easy to demonstrate $\varphi$-irreducibility of the chain.

Let $\mathbb{R}^{N \times N}$ be the vector space of real-valued $N \times N$ matrices. Equip this with the metric defined by \eqref{eq:metric}. We will assume that if $uv \notin U_i \times V_i$ for some $i \in \mathcal{I}$ then $uv \in \mathcal{F}$. Let $V$ be the affine subspace of $\mathbb{R}^{N\times N}$ with row and column margins $s^-$ and $s^+$ respectively and additionally respecting the forced zeros implied by $\mathcal{F}$. This has some dimension 
\begin{equation*}
d \geq N^2 + 1 - 2N - \mid \mathcal{F} \mid.
\end{equation*}
Then the reference set satisfies
\begin{equation*}
    \mathcal{G} = V \cap \Omega_{N \times N},
\end{equation*}
where $\Omega_{N \times N}$ is the $N \times N$-dimensional non-negative orthant.

Fixing $W \in \mathcal{G}$, we see that a $k$-cycle is equivalent to sampling from a line in $V$. The `direction' of this line is given by an $N \times N$ matrix $M$, where $M_{uv} = 1$ if $uv$ is on the odd side of the cycle, $M_{uv}=-1$ if on the even side, and with all other entries being zero. Fix any $W^*$ in $V$. The arguments used in the proof of \ref{lemma:neighbourhood} can easily be extended to show that there exists $(M^*_1,\ldots, M^*_{L})$ and a real-valued vector $(\Delta_1^*,\ldots,\Delta_L^*)$ for which
\begin{equation*}
    W^* = W + \sum_{l=1}^L M^*_l \Delta^*_l,
\end{equation*}
and each $M^*_l$ corresponds to a $k$-cycle. This in turn implies that there exists a set of such matrices labelled $(M_1,\ldots, M_d)$ which form an affine basis for $V$. Therefore 
\begin{equation}
    f_W(\Delta) := W + \sum_{i=1}^d M_i \Delta_i,
\label{eq:param}
\end{equation}
where $\Delta := (\Delta_1,\ldots, \Delta_d)$ parameterizes $V$. Equation \eqref{eq:param} is a homeomorphism between $\mathbb{R}^d$ and $V$. We are now ready to prove Proposition \ref{prop:irreducibilty}.
\newline

\begin{proof}[Proof of Proposition \ref{prop:irreducibilty}]
Fix any $W \in V$ for which $W_{uv} > 0$ if $uv \notin \mathcal{F}$. The existence of such a $W$ is implied by Lemma \ref{lemma:posedge}. Consider the parameterization of $V$ defined by \eqref{eq:param}. We use this to define a measure $\varphi$ on $\mathcal{G}$, and show that the Markov chain is irreducible with respect to $\varphi$.

Let $\epsilon_0 := \min_{uv\notin\mathcal{F}} \mid w_{uv} \mid / d$, and define $N_{\epsilon_0} := (-\epsilon_0,\epsilon_0)^d \subset \mathbb{R}^d$. Let $\mu$ be Lebesgue measure restricted to $N_{\epsilon_0}$. Define $\varphi$ on $(\mathcal{G},\mathcal{B})$ as the pushforward of $\mu$ under $f_w$, so that
\begin{equation*}
    \varphi(A) = \lambda^d\left(f^{-1}(A) \cap N_{\epsilon_0}\right),
\end{equation*}
for measurable $A$.

We now show $\varphi$-irreducibility. Fix any measurable $A$ for which $\varphi(A) > 0$. Letting $E := f_w^{-1}(A) \cap N_{\epsilon_0}$, it is clear that $E$ must be Lebesgue positive in $\mathbb{R}^d$. Consider a Markov chain $X_n$ with kernel $\Lambda$. Define $\Phi_n := f_w^{-1}(X_n)$ and suppose $\Phi_n \in N_{\epsilon_0}$. Each of the basis matrices $m_1, \ldots, m_d$ corresponds to a $k$-cycle. There is positive probability that the chain selects the cycle corresponding to $m_k$ at the $n+k$\textsuperscript{th} step. Conditional on this, $\varphi_{n+d}$ has positive density everywhere on $N_{\epsilon_0}$. This implies that if $X_n \in f_w(N_{\epsilon_0})$ then $Q^d(X_n, A) > 0$.

It remains to show that for each $x$, $Q^n(x, f_w(N_{\epsilon_0})) > 0$ for some $n$. Since $f_w$ maps open sets, $f_w(N_{\epsilon_0})$ is open in $\mathcal{G}$. Therefore this result follows from Lemma \ref{lemma:neighbourhood}.
\end{proof}

\end{document}